\documentclass[11pt]{article}

\usepackage{color,graphicx}

\renewcommand{\tilde}[1]{\widetilde{#1}} %

\newcommand{\poly}{\mathrm{poly}}  
\newcommand{\negl}{\mathrm{negl}}

\def\01{\{0,1\}}

\newcommand{\calA}{\mathcal{A}}

\newcommand{\calO}{\mathcal{O}}

\newcommand{\eps}{\epsilon} %

\let\epsilon=\varepsilon %

\newcommand{\microspace}{\mspace{.5mu}} %
\newcommand{\ket}[1]{\ensuremath{\lvert\microspace #1
    \microspace\rangle}} %
\newcommand{\bra}[1]{\ensuremath{\langle\microspace #1
    \microspace\rvert}} %
\newcommand{\ketbra}[2]{\lvert #1 \rangle \! \langle #2 \rvert} %
\newcommand{\kb}[1]{\ketbra{#1}{#1}} %

\newcommand{\norm}[1]{\lVert#1\rVert}

\newcommand{\comment}[1]{}

\newcommand{\class}[1]{\textsf{\textup{#1}}\xspace} %

\newcommand{\NP}{\class{NP}} %
\newcommand{\QMA}{\class{QMA}} %
\newcommand{\QCMA}{\class{QCMA}} %
\newcommand{\UQMA}{\class{UQMA}} 
\newcommand{\BQP}{\class{BQP}} %
\newcommand{\MA}{\class{MA}} %
\newcommand{\QMAtwo}{\class{QMA(2)}}

\newcommand{\CQMAtwosepp}[1]{\class{cloneableQMA}^{\mathrm{SEP}}_{#1}(2)}
\newcommand{\CQMAtwop}[1]{\class{cloneableQMA}_{#1}(2)}  
\newcommand{\yes}{\mathrm{yes}}   
\newcommand{\no}{\mathrm{no}}   
\newcommand{\CQMA}{\class{cloneableQMA}}
\newcommand{\CQMAp}[1]{\class{cloneableQMA}_{#1}}
\newcommand{\CQMAtwo}{\class{cloneableQMA(2)}}  
\newcommand{\CQMAk}{\class{cloneableQMA}(k)} 
\newcommand{\CQMAkp}[1]{\class{cloneableQMA}_{#1}(k)} 

\usepackage{wrapfig}

\usepackage[margin=1in]{geometry} 
\usepackage[inline]{enumitem}%
\usepackage[utf8]{inputenc}
\usepackage{titlesec} 
\usepackage{amssymb} 
\usepackage{amsopn}
\usepackage{xspace}
\usepackage{graphicx} 
\usepackage{relsize} 
\usepackage{bm} 
\usepackage{xcolor} 
\usepackage{amsmath}
\usepackage{breqn} 
\usepackage{algpseudocode} 
\usepackage{multirow}  
\usepackage{amsthm} 
\usepackage{fullpage}
\usepackage{booktabs} %
\usepackage{dsfont} %
\usepackage{float} %
\usepackage{hyperref} %
\usepackage{mdframed} %
\usepackage{microtype} %
\usepackage{charter} %
\usepackage{soul} %
\usepackage{suffix} %
\usepackage{cleveref}
\usepackage{multicol} 
\usepackage{stackengine}
\usepackage{scalerel} 
\usepackage{titlesec}
\usepackage{tikz} 
  \usetikzlibrary{calc,positioning} 
\usepackage{mdframed}

\newtheorem{theorem}{Theorem}[section]
 
\newtheorem{lemma}[theorem]{Lemma} 

\newtheorem{corollary}[theorem]{Corollary}
\theoremstyle{definition}
\newtheorem{definition}[theorem]{Definition}
\newtheorem{remark}[theorem]{Remark}

\titlespacing*{\paragraph}{0pt}{4pt}{3pt} %

\title{
The Role of piracy in quantum proofs
}

\author{
Anne Broadbent\thanks{University of Ottawa, Canada \tt{abroadbe@uottawa.ca}}
\and 
Alex B. Grilo\thanks{Sorbonne Universit\'e, CNRS, LIP6, France. \tt{Alex.Bredariol-Grilo@lip6.fr}}
\and 
Supartha Podder\thanks{Stony Brook University, USA. \tt{supartha@cs.stonybrook.edu}}
\and 
Jamie Sikora\thanks{Virginia Tech, USA. \tt{sikora@vt.edu}} 
} 
   
\date{}

\begin{document} 

\maketitle

\begin{abstract}  
A well-known feature of quantum information is that it cannot, in general, be cloned. 
Recently, a number of quantum-enabled information-processing tasks have demonstrated  various forms of uncloneability; among these forms, \emph{piracy} is an adversarial model that gives maximal power to the adversary, in controlling both a \emph{cloning-type} attack, as well as the \emph{evaluation/verification} stage. 
Here, we initiate the study of \emph{anti-piracy} proof systems, which are proof systems that inherently prevent piracy attacks. We define anti-piracy proof systems, demonstrate such a proof system for an oracle problem, and 
also describe a candidate anti-piracy proof system for $\NP$. 
We also study quantum proof systems that are \emph{cloneable} and settle the famous \QMA vs.~$\QMA(2)$ debate in this setting. 
Lastly, we discuss how one can approach 
the \QMA vs.~\QCMA question, by studying its cloneable variants. 
\end{abstract}

\section{Introduction}
 The groundbreaking result of Goldwasser, Micali and Rackoff~\cite{GMR89} revolutionized the way we think about proofs, by   
 defining a proof system as an interactive game between an unbounded prover and a computationally bounded verifier, where the prover wants to convince the verifier about the truth of some statement. For a proof system to be valid, we want that there is a way that an honest prover can convince the verifier of a true statement, whereas we want that no malicious prover can convince the verifier of a false statement.

This foundational result is adopted as the perfect paradigm for exploring fundamental concepts in a many areas, such as in 
approximability~\cite{Arora98}, operator algebras~\cite{JNV+20arxiv}, and, highly relevant to this work, cryptography. Already in \cite{GMR89}, a cryptographic version of an interactive proof systems was defined as a \emph{zero-knowledge proof system}. This is an interactive proof system, with the additional property that the  verifier does not learn anything new except the fact that the proved statement is indeed true. Such a seemingly paradoxical object has had a profound impact to cryptography, and understanding which problems have zero-knowledge proof systems has had far-reaching applications of theoretical and practical interest. In particular, the definition of zero-knowledge triggered the development of the simulation paradigm, which is fundamental in modern cryptography~\cite{Gol04b,Lindel1l6}.

Starting with the definition of $\QMA$~\cite{KSV02} as the quantum analogue of~$\NP$, where a {\em quantum} polynomial-time algorithm verifies a {\em quantum} proof, the study of proof systems in the quantum world has led to foundational results towards the understanding of quantum information. The  notion of a quantum proof has been extended to the interactive setting~\cite{Wat03}, with multiple provers~\cite{KM03,CHTW04}, 
and zero-knowledge~\cite{watrous2002limits}.

Simultaneously to the above, the area of quantum cryptography has flourished in its study of how quantum resources help in cryptographic protocols. A key property that enables many of these protocols is that quantum states cannot be generically copied~\cite{WZ82,Die82}. We note, however that concrete quantum cryptography protocols involve specific distributions of quantum states and that proof techniques for these schemes go well beyond the basic no-cloning principle.  
Fortunately, many use cases for quantum-cloning in cryptography have been developed, together with their corresponding security notions and proofs. This includes 
quantum money~\cite{Wie83,MolinaVW12}  and new functionalities such as encryption and functionalities with  certified deletion and copy-protection (see \Cref{sec:related-work-anti-piracy}).

Given the importance of proof systems in several domains, including cryptography, and the importance of the uncloneability in quantum cryptography, our work seeks to unify the two theories, by studying uncloneability in proof systems. Our central question is of the existence of such proofs. 

\begin{center}
\label{eqn:CQ1}
\text{\em \textbf{Central Question:} Which problems have uncloneable quantum proof systems, and which do not?} 
\end{center}

This topic was raised over ten years ago by Aaronson~\cite{Aar09}, but  has remained largely unanswered. 
Two recent works~\cite{JK23arxiv,GMR23arxiv} consider this question; see more details in~\Cref{sec:related-work-anti-piracy}.

In the remainder of this introduction, we discuss in depth our approaches to these questions and our contributions. 
In \Cref{sec:intro-uncloneable}, we motivate our study of which problems have uncloneable proof systems by describing our new definition of what we call anti-piracy proof systems, and discuss methods to achieve it; we also compare with recent related work on uncloneable proofs~\cite{JK23arxiv,GMR23arxiv}. 
In \Cref{sec:intro-cloneable}, we discuss the other perspective of which problems have cloneable proof systems
by examining  complexity-theoretic implications of cloneability and showing that in the cloneable setting having multiple unentangled provers does not increase the computational power of $\QMA$.

\subsection{Anti-piracy proof systems} \label{sec:intro-uncloneable}

The first part of our central question asks about the existence of uncloneable proof systems. Towards answering this question, we contribute  a new definition that we call \emph{anti-piracy proof systems}; in \Cref{sec:defn-challenges}, we motivate this concept and in \Cref{sec:defn-uncloneable} we detail the definition. 
We also discuss how to achieve this definition in \Cref{sec:intro-anti-piracy-schemes}. 

\subsubsection{Challenges in defining uncloneable proof systems} \label{sec:defn-challenges}

As mentioned, the no-cloning principle tells us that unknown quantum states cannot be generically copied; hence it would be reasonable to expect that families of quantum states, such as proofs for $\QMA$,  could be shown to be uncloneable. 
However, such statement requires a careful approach. 
First, in the context of $\QMA$, any statement regarding the hardness of cloning quantum witnesses cannot be information theoretical: given the input to a problem in $\QMA$, one can find, in exponential time, a quantum proof that satisfies it. 
Secondly, as quantum witnesses are very structured, one would need to prove the computational uncloneability for such families of states. Examples of such families that appear in cryptography (but not proof systems) are the Wiesner states \cite{Wie83,MVW13} and the subspace/coset states~\cite{AC12,CV22}.

Towards understanding the  fundamentals of uncloneable proof systems, consider the following candidate for a one-message uncloneable proof system for a promise problem $A = (A_{yes},A_{no})$.

\paragraph{First attempt at a definition for uncloneable proof systems.} 
Given a quantum proof system satisfying completeness and soundness, we require that if $\ket{\psi_x}$ is a witness for $x \in A_{yes}$, then for any polynomial-time adversary $\mathcal{A}$, the probability that $\mathcal{A}(\ket{\psi_x}) = \ket{\psi_x} \ket{\psi_x}$ is negligible. 

While our first attempt seems reasonable, below, we describe below multiple shortcomings. 
(We keep the exposition to product states and equality for simplicity.)  

\paragraph{``Cloning a functionality''.} 
Consider an algorithm $\mathcal{A}$ such that $\mathcal{A}(\ket{\psi_x}) = \ket{\phi_A} \ket{\phi_B}$, where both $\ket{\phi_A}$ and $\ket{\phi_B}$ make $V$ accept $x$ with high probability but $\ket{\phi_A}\ket{\phi_B} \neq \ket{\psi_x}\ket{\psi_x}$. 
Formally speaking, while $\mathcal{A}$ did not ``clone'' the state $\ket{\psi_x}$, $\mathcal{A}$ is able to ``clone its functionality'', since $\ket{\phi_A} \ket{\phi_B}$ can be used to convince two independent verifiers that $x$ is a yes-instance. 
This situation is not ruled out by our first attempt.\footnote{To mitigate the above situation, one approach would be to describe a  verifier~$V$ such that for every $x$, there is a single $\ket{\psi_x}$ that makes $V$ accept $x$ with high probability, for instance, as in \emph{Unique} $\QMA$ ($\UQMA$)\cite{aharonov2022pursuit}). 
For such proof systems, ruling out cloning would also rule out the behaviour of $\mathcal{A}$ described above. 
However, looking ahead, our next shortcoming would be applicable to this situation.}

\paragraph{``Differing verifications and pirate-verifier collusions''.}
Consider an algorithm $\mathcal{A'}$ such that $\mathcal{A'}(\ket{\psi_x}) = \ket{\phi'_A} \ket{\phi'_B}$ and a verifier $V'\neq V$ that accepts $\ket{\phi'_A}$ and $\ket{\phi'_B}$ with high probability. 
As long as such a $V'$ leads to a sound proof system, we believe that such a situation should be disallowed by a correct definition; 
however our first attempt at a definition does not rule out such a scenario where alternative verification schemes exist, possibly in collusion with the adversarial attack. 
Two motivating scenarios follow. 

\begin{itemize}

\item \emph{Quantum-classical proofs:} 
Consider a proof systems with $\ket{\psi_x} = \ket{\psi_q}\ket{\psi_c}$, where $\ket{\psi_q}$ is an uncloneable quantum state that is independent of $A$ (e.g.~Wiesner states or subspace/coset states), and $\ket{\psi_c}$ is a cloneable state (for example a classical string) that is used as a classical witness (such as in $\NP$ or $\QCMA$). 
In this case, the proof system might be ``uncloneable'' for the honest verification which would first verify the quantum state, and proceed to verify the classical state only if the quantum state is deemed authentic. 
However, it is easy to make this proof system cloneable by considering a pirate that outputs copies of $\ket{\psi_c}$ to valid verifiers that bypass the quantum verification of $\ket{\psi_q}$ and only verify $\ket{\psi_c}$. 

\item \emph{Collusions between the pirate and verifier.} 
As a generalization of the above, we note that from the point of view of someone that wants to decide if  $x {\in} A_{yes}$, one obvious approach is to deviate from the honest behaviour $V$. 
In this situation, the pirate and the would-be verifier~$V'$ are in a collusion, since $V'$ could depend on the actions of the pirate. This situation is similar to what we find in practice  with pirated media: users of such media are satisfied as long as the end-result is the same or close enough; for this, bypassing the ``honest'' decoders that comply with the distributor's requirements is common, and the result is that the method to render the media differs in the intended and pirated setting. 
\end{itemize}

We conclude that our first attempt is insufficient and via the  situations described above, we have a better grasp at the subtleties that must be considered towards a robust definition of uncloneability in proof systems.

\subsubsection{Our definition: anti-piracy proof systems}
\label{sec:defn-uncloneable}
Our work contributes a formal definition of \emph{anti-piracy} proof systems, which,
we believe, captures a general sense of uncloneability in proof systems.  Since \emph{pirates} are a more powerful case than \emph{cloners}, our choice of terminology emphasises the fundamentally new perspective on uncloneablity in the context of proof systems:
\begin{quote}
    \textbf{Contribution 1:} A formal definition of \emph{anti-piracy} proof systems.
\end{quote}

As pointed out in \Cref{sec:defn-challenges}, an anti-piracy proof system must not only account for pirates that have access to a proof and attempt to modify it; it must also account for  deviations in the verification. To this end, we define a key notion of an \emph{admissible} verifier for a given promise problem~$A$:  
  
\begin{quote}
     A verifier $V$ is \emph{admissible} for~$A$ if there exists a prover $P$ such that the resulting proof system $(P,V)$ satisfies completeness and soundness for~$A$. 
\end{quote}
    
With this  notion in hand, towards defining anti-piracy proof system, we can quantify over ``reasonable'' verifiers: they want to be convinced in the case that there is a valid proof (there exists a prover that would satisfy completeness), but are not overly gullible (soundness must hold for all possible provers). For instance, this notion lets us exclude verifiers that always reject or always accept. We can now give an informal version of our main definition:

\begin{quote}
A proof system $(P,V)$ for $A$ is an anti-piracy proof system  if, for any $x \in A_{yes}$, any polynomial-time pirate having access to a proof for $x$ and that outputs two registers $(R_1,R_2)$, then for any pair of admissible verifiers $(V_1, V_2)$ for $A$, the probability that \emph{both} verifiers accept given registers $(R_1,R_2)$, respectively, is negligible. 
\end{quote}

One problem with the previous definition is that it requires the pirate to perform badly for {\em all} instances. This is a problem because even for hard problems in worst/average case, there could be a big fraction of the inputs for which it is easy to find a witness. As in the definition of witness hiding~\cite{feige1990witness}, the solution is to ask for a distribution over the inputs on which the pirate behaves poorly.

\paragraph{Further remarks.}
We now mention some additional remarks on our definition:
\begin{itemize}
\item The property of being anti-piracy related to a proof system and not a problem/language. Thus, it is conceivable that a language would admit both an anti-piracy proof system and another proof system whose proofs are perfectly cloneable. In particular, our work conjectures the possibility of anti-piracy proof systems for all problems in NP (see \Cref{sec:intro-anti-piracy-schemes}).

    \item For simplicity, we focus on \emph{non-interactive} (1-message) proof systems. 
     Moreover, we consider here only {\em public verification}. 
     If, instead, we had designated verifiers, where the prover and the verifier share private resources such as private keys or EPR pairs, then achieving such a definition would be straightforward. We stress that the pirate should have access to the same resources as the verifiers, and in particular, they can verify the proof.
See~\Cref{sec:openQuestions} for further discussion.
    \item Anti-piracy proof systems are only meaningful for hard problems and instances. In the absence of such hardness, an admissible verifier can disregard any input while ensuring completeness and soundness. We note that this is in stark contrast with zero-knowledge proof systems, where easy problems and instances trivially satisfy the zero-knowledge property. 
 However, as in  zero-knowledge, anti-piracy implies witness hiding \cite{feige1990witness}. This is because, in any proof system that is not witness hiding, a pirate should be able to extract a ``classical'' witness and given its nature, can then clone it.

\end{itemize}

\subsubsection{Schemes that achieve the definition}
\label{sec:intro-anti-piracy-schemes}
Our definition  for anti-piracy proof systems opens a completely new paradigm for uncloneability, that goes beyond the uncloneability of any specific information or functionality, focusing on the intrinsic usefulness of a proof and how quantum information could be used to limit this usefulness to multiple verifiers.

The obvious next question is: 
    \emph{Is it possible to achieve anti-piracy proof systems?}
Towards an answer, we note the intrinsic difficulty to the problem, due to the complexity of its definition: because we quantify over \emph{all} possible admissible verifiers (which implicitly quantifies over all possible provers), and then over all possible pirates,  achieving a proof is particularly unwieldy. 

We thus approach the question stepwise, first showing a conditional result. 

\begin{quote}
    \textbf{Contribution 2:} We provide an anti-piracy proof system for an oracle problem. Our oracle anti-piracy proof system makes use of quantum subspace states and subspace membership oracles, as well as a key result from Ben-David and Sattah, on the probability of finding strings in a subspace and their dual given these resources~\cite{ben2023quantum}.
\end{quote}

We present our anti-piracy proof systems relative to an oracle in \Cref{sec:uncloneable_oracle}.   Next, towards an unconditional result on anti-piracy proof systems, we give the following. 

\begin{quote}
    \textbf{Contribution 3:} We provide a candidate for uncloneable proof systems for $\NP$. This candidate builds on the idea of the uncloneable subspace state in Contribution 2, but instead of oracles for subspace membership, it uses cryptographic techniques such as obfuscation and non-interactive zero-knowledge proof systems.
\end{quote}

In \Cref{sec:uncloneable_np}, we explain the candidate and explore some challenges towards a  proof.

\subsubsection{Related work} 
\label{sec:related-work-anti-piracy}

We now mention related work on uncloneability in quantum information, comparing with our work. 

\paragraph{Uncloneable proofs.}

There are two recent results on uncloneable proofs~\cite{JK23arxiv,GMR23arxiv} and we now compare with our work.

In \cite{JK23arxiv}, the authors define the notion of uncloneable non-interactive zero-knowledge proof systems (NIZKs) which capture the standard notions of completeness, soundness and zero-knowledge, with the additional requirement that, given an honest proof in the protocol, no adversary can pass \emph{two} honest verifications of the proof system (here by honest verification, we mean the verification prescribed by the protocol). 
In particular, in their proposed protocols, in order to check whether the instance is positive or negative, the verifier only checks classical information. 
The uncloneable (quantum) piece of the proof is then appended to guarantee uncloneability. While such protocols find their applications elsewhere, we note that they do not satisfy our definition of anti-piracy, since the uncloneable part could be easily removed and the proof would satisfy an admissible verification. 

In \cite{GMR23arxiv}, the authors present two definitions of uncloneable NIZKs. 
The first one is similar to~\cite{JK23arxiv} that we have discussed above. 
In a second definition, called \emph{strong} uncloneable proofs, they
require that if the adversary can make two (not necessarily honest) verifiers accept, then we can extract one of the witnesses from the adversary (without the original proof). 
While \cite{GMR23arxiv} demonstrate impossibility results in this model, their results are not applicable to our scenario. The main reason is that \cite{GMR23arxiv} allows interaction between a would-be verifier and an adversary (called a ``right-receiver'' and ``\textsf{MiM}'' in their work), meaning that a first would-be verifier can return a message to the adversary, who then attempts to convince a second would-be verifier.  In contrast, our definition closely models a one-message proof system, enforcing a \emph{simulatenous} evaluation by possibly malicious verifiers.

\paragraph{Copy-Protecting information.} 
Wiesner's visionary ideas in the late 1960s (but not published until much later~\cite{Wie83}) 
are the foundations of much of present-day quantum cryptography, including quantum key distribution~\cite{BB84,Eke91}, quantum money~\cite{AFG+12}, uncloneable encryption~\cite{BL20}, certified deletion~\cite{BI20} and more (for a survey, see~\cite{BS16}). 
These works, however are fundamentally different from uncloneability proofs because the primitives are used for encoding strings (as in encryption), or verify authenticity only (as in money). 

\paragraph{Copy-protecting software.} 
In groundbreaking work~\cite{Aar09} pushed beyond the paradigm of uncloneable information, studying for the first time the idea of uncloneable \emph{functionalities} by defining \emph{quantum copy-protection}, and proposing related schemes. 
However, this work left many open questions; among them, the question of building \emph{uncloneable quantum software} from standard cryptographic assumptions; follow-up work solved this problem under multiple angles~\cite{ALP21,BJL+21,CMP24}. 
In uncloneable software, there is an explicit input-output behaviour with respect to the evaluator, and this is a key element for the corresponding security game (which relates to challenge distributions, and the limited ability of various parties to correctly predict the outcome of a computation, given an input). 
In contrast, in anti-piracy proof systems, there is no challenge distribution, which makes the task of even defining such a scheme quite difficult. 

\paragraph{Uncloneable advice.}

Continuing on the line of work of studying uncloneability and its use, very recently \cite{BKL23arxiv} studied uncloneability of quantum advice. They showed that there exist a non-uniform language whose advice states are uncloneable. 
Following the framework of quantum copy-protection, their definition is given with respect to a challenge distribution.  
However, in contrast with quantum copy-protection (and more along the lines of anti-piracy quantum proofs), the uncloneable advice state is unkeyed. 

\subsection{Cloneable proof systems and their implications}
\label{sec:intro-cloneable}
We now switch to discussing cloneable proof systems and their implications to complexity theory. 
A major pressing question in quantum complexity theory, first asked by Aharonov and Naveh~\cite{AN02arxiv}, is to understand the power of quantum proofs over classical proofs. 
A notorious example is understanding the relationship between $\QMA$ and its classical proof counterpart, $\QCMA$. 
Progress towards answering this question appears in~\cite{Wat00,AK07,KP14b,FK15arxiv, Kla17, li2023classical, natarajan2024distribution, ben2024oracle}, however the main question still remains. 

One method to  separate $\QCMA$ and $\QMA$ is to find a complexity class that sits between them. 
For example, in \cite{GKS16}, they consider simpler quantum states, called subset states, as witnesses, but they show that this simplification is still strong enough to encompass $\QMA$.  
Following \cite{NZ24}, in this work we consider a different aspect of quantum states which is cloneability. 
Since classical states are perfectly cloneable, classical witnesses are perfectly cloneable. 
(There are many interesting complexity classes which are defined with classical witnesses such as $\NP$, its randomized variant $\MA$, and $\QCMA$.) 
We aim to understand the consequences of general {\em quantum witnesses} that \emph{can} be copied, which would make them closer to classical in a sense. 
 
More concretely, \cite{NZ24} defines the complexity class $\CQMA$ where in the completeness case, the prover can send a state which is  \emph{copiable}, up to some fidelity parameter $f$. Naturally, $\CQMA$ is a subset of $\QMA$ and a superset of $\QCMA$. 
Clearly, if $\CQMA=\QMA$ (or the stronger statement $\QCMA = \QMA$), then every problem in $\QMA$ has perfectly cloneable proof. 
On the other hand, any separation (e.g. oracle) between $\CQMA$ and $\QMA$ or $\QCMA$ and $\CQMA$ would give the same type of separation between $\QMA$ and $\QCMA$ as well. 
\cite{NZ24} gives a \emph{quantum oracle} separation between $\CQMA$ and $\QMA$ reproving the result of \cite{AK07} in the cloneable setting. Moreover if $\QMA\neq\QCMA$, then it is intriguing where $\CQMA$ lies between them, and which problems outside $\QCMA$ admit cloneable proofs~\cite{Aar16}.

In this work, we push forward the consequences of having cloneable proofs.

\subsubsection{Summary of Contributions: Implications of Cloneable Proof Systems} 

In \Cref{section:complexity_of_piracy}, we build on the work of \cite{NZ24}, which defines and studies the complexity class $\CQMA$, extending it to unentangled provers and studying its behaviour. 

\begin{quote}
    \textbf{Contribution 4:} 
    We show that in the uncloneable regime, there is no power of unentanglement. 
    More precisely, we define the analogous class $\CQMAk$, referring to $k$ unentangled provers, and show that $\CQMAk=\CQMA$. 
\end{quote} 
\quad

\begin{figure}[h]
    \centering
    \begin{tabular}{ccc}
        \begin{tikzpicture}[scale=0.65]
            \definecolor{setColorA}{RGB}{255, 204, 204}

\newcommand{\blue}[1]{\textcolor{blue}{#1}}
\newcommand{\red}[1]{\textcolor{red}{#1}} 
\newcommand{\gray}[1]{\textcolor{gray}{#1}}
\newcommand{\teal}[1]{\textcolor{teal}{#1}}

\newcommand{\demicolon}{%
        \textcolor{red}{
	\rotatebox[x=1.391pt,y=3.777pt]{-120}{;}
	\hspace{-9.2pt}
	;
	\hspace{-9.2pt}
	\rotatebox[x=1.391pt,y=3.777pt]{120}{;}
}}

\newcommand{\note}[1]{\blue{[* #1 *]}} 
\newcommand{\snote}[1]{\blue{[\demicolon Jamie: #1 \demicolon]}} 
\newcommand{\jnote}[1]{\blue{[\demicolon Jamie: #1 \demicolon]}} 
\newcommand{\jsnote}[1]{\blue{[\demicolon Jamie: #1 \demicolon]}}   
\newcommand{\pnote}[1]{\teal{[*** Supartha: #1 ***]}} 
\newcommand{\bnote}[1]{\blue{[*** Anne: #1 ***]}} 
\newcommand{\agnote}[1]{\blue{[*** Alex: #1 ***]}}
            \definecolor{setColorB}{RGB}{255, 255, 153}
            \definecolor{setColorD}{RGB}{204, 255, 204}
            \definecolor{setColorC}{RGB}{204, 204, 255}
            \definecolor{setColorE}{RGB}{255, 204, 255}
            \fill[setColorE, opacity=0.3] (0,1.5) circle (4.2);
            \fill[setColorC, opacity=0.5] (-1,1.2) circle (3);
            \fill[setColorD, opacity=0.5] (1,1.2) circle (3);
            \fill[setColorB, opacity=0.5] (0,0.3) circle (1.5);
            \fill[setColorA, opacity=0.5] (0,0) circle (0.8);
            \node at (0, 4.8) {$\QMA(k) = \QMA(2)$};
            \node at (-3.5, 3) {$\CQMAk$};
            \node at (3, 1) {$\QMA$};
            \node at (-0.15, 1) {$\CQMA$};
            \node at (0, 0) {$\QCMA$};
        \end{tikzpicture}
        & \hspace{1cm} &
        \begin{tikzpicture}[scale=0.55]
            \definecolor{setColorA}{RGB}{255, 204, 204}
            \definecolor{setColorB}{RGB}{255, 255, 153}
            \definecolor{setColorC}{RGB}{204, 255, 204}
            \definecolor{setColorD}{RGB}{255, 255, 153}%
            \definecolor{setColorE}{RGB}{255, 204, 255}
            \fill[setColorE, opacity=0.3] (0,2.5) circle (5);
            \fill[setColorC, opacity=0.5] (0,2) circle (4);
            \fill[setColorD, opacity=0.5] (0,1.2) circle (3);
            \fill[setColorB, opacity=0.0] (0,0.3) circle (2);
            \fill[setColorA, opacity=0.5] (0,0) circle (1.25);
            \node at (0, 6.5) {$\QMA(k) = \QMA(2)$};
            \node at (0, 2.75) {$\CQMA$};
            \node at (0, 5) {$\QMA$};
            \node at (-0.15, 1.75) {$=\CQMAk$};
            \node at (0, 0) {$\QCMA$};
        \end{tikzpicture} \\
    \end{tabular}  
    \caption{The $\QMA$ landscape with respect to cloneable proof systems. 
    (Left) Immediate/known containments. 
    (Right) Containments following our work. %
    }   
\label{fig:complexity_side_by_side}
\label{fig:complexity1}\label{fig:complexity_side_by_side_intro}
\end{figure}

 \Cref{fig:complexity_side_by_side_intro} depicts \CQMA with respect to other variants of $\QMA$. 
One could ask if any of the containments of $\QCMA \subseteq \CQMA \subseteq \QMA \subseteq \QMAtwo$ are strict or not. 
Resolving this question is of great interest. 
For one reason, any strict containment between \CQMA and any of these other classes would separate $\QCMA$ from $\QMAtwo$, or possibly from $\QMA$, a famous open problem in complexity theory. 
On the other hand, proving $\CQMA = \QMA$ implies that $\QMA$ witnesses can, in general, forfeit the quantum feature of being uncloneable, yielding great insight towards the nature of quantum proofs. 
Lastly, proving $\CQMAtwo = \QMAtwo$ would imply $\QMAtwo = \QMA$, settling a longstanding open problem.

\subsection{Open Questions and Future Work}
\label{sec:openQuestions}

After many recent works on uncloneability of quantum states and their applications to cryptography, we are still far from a full understanding. Below is a non-exhaustive list of some of the main open questions that are directly related to our work:

\begin{itemize}
    \item As we discussed before, the anti-piracy proof system is fundamentally different in nature than the other uncloneable proof systems. Towards its further exploration, can we design anti-piracy proof systems for a more natural problem in $\NP$ (or more generally for $\QMA$)? Our work provides a candidate construction for an anti-piracy proof system for all of~$\NP$. Can we prove that it satisfies our definition? 

    \item Uncloneable primitives often reduce to one another. It would be interesting to know if the definition of an anti-piracy proof system can imply or can be implied by other uncloneable cryptography primitives? Also, like zero-knowledge, anti-piracy implies witness hiding. Are there further connections of anti-piracy with zero knowledge proof systems?

\item We leave for future work the study of uncloneable proof systems where  interaction is considered. When multiple rounds are possible, a solution involving first the verification of shared entanglement, followed by teleportation seems plausible, while in a model without interaction, recent techniques related to \emph{monogamy-of-entanglement} would seem relevant~\cite{BC23arxiv}.

\item  We note that two open problems in complexity theory can be resolved via cloneable proof variants. 
Contribution 4 above says that if the complexity class $\QMAtwo$ equals its cloneable counterpart, then this resolves the $\QMA$ vs.~$\QMAtwo$ problem. 
On the other hand, if we have $\QMA \neq \CQMA$, then this resolves the $\QMA$ vs.~$\QCMA$ problem (in the negative). 
Thus, we leave as an open problem making progress in these important questions by studying the power of (un)cloneability.  

\item Our definition of $\CQMAk$ leaves room for variations in the setup and in the parameters. We discuss some of the variations in \Cref{rem:cloneable} and \Cref{sec:remarks-cloneable}, and we leave the extension of our results to these settings as future work.

\end{itemize}

\subsection{Organization}

Our work on anti-piracy proof systems is in \Cref{sec:Uncloneable-Proofs}, while the complexity-theoretic results of cloneability in QMA proof systems are covered in \Cref{section:complexity_of_piracy}. 
We defer the preliminaries to \Cref{Section:prilim}.

\subsection*{Acknowledgements}  

We would like to thank Barak Nehoran for useful discussions and comments.  
We acknowledge support of the Natural Sciences and Engineering Research Council of Canada (NSERC). This work was supported by the Air Force Office of Scientific Research under award number FA9550-20-1-0375.
S.P.\ is supported by US Department of Energy (grant no DE-SC0023179) and partially supported by US National Science
Foundation (award no 1954311). S.P.\ thanks the University of Ottawa, where a portion of this work was completed. This work was done in part while ABG was visiting the Simons Institute for the Theory of Computing. ABG is funded by ANR JCJC TCS-NISQ ANR-22-CE47-0004 and by the PEPR integrated
project EPiQ ANR-22-PETQ-0007 part of Plan France 2030.

\section{Towards anti-piracy proof systems}
\label{sec:Uncloneable-Proofs}
In this section, we discuss the possibility of anti-piracy proof systems.  We start by providing the definition of anti-piracy proofs in \Cref{sec:uncloneable_definition}. Then, in \Cref{sec:uncloneable_oracle}, we show an anti-piracy proof for an oracle problem related to subspace states. Finally, in \Cref{sec:uncloneable_np}, we provide a candidate for anti-piracy proof systems for all of $\NP$. 

\subsection{Definition}
\label{sec:uncloneable_definition}

In order to define anti-piracy proof systems, we start with defining admissible verifiers and non-interactive proof systems.

\begin{definition}[Admissible verifier and non-interactive proof system]\label{def:admissible}
Let $A = (A_{yes},A_{no})$ be a promise problem. We say that a quantum polynomial-time algorithm $V$ is an admissible verifier for~$A$ if:
\begin{itemize}
    \item[] \textbf{completeness:} if $x \in A_{yes}$, there exists a proof $\ket{\psi}$ such that $\Pr[V(x,\ket{\psi}) = 1] \geq 1 -\negl(|x|)$.
    \item[] \textbf{soundness:} if $x \in A_{no}$, for all proofs $\ket{\psi}$ such that $\Pr[V(x,\ket{\psi}) = 1] \leq \negl(|x|)$.
\end{itemize}

\noindent Moreover, for a possibly unbounded prover $P$, $(P,V)$ is a non-interactive proof system for $A$ if for all $x \in A_{yes}$, $\Pr[V(x,P(x)) = 1] \geq 1 -\negl(|x|)$.
\end{definition}

We notice that we define these notions with negligible completeness and soundness error since these are the relevant parameters in the cryptographic setting, which is our motivation in this section.

As we discussed in the introduction, our notion of anti-piracy proofs only makes sense in average case. For that, we define now a generator of instances as in \cite{feige1990witness}.

\begin{definition}
 A generator $G$ for some promise problem $A = (A_{yes},A_{no})$ is a  randomized polynomial-time algorithm  that on input $1^n$ outputs an $x \in A_{yes}$ such that $|x|  = n$.
\end{definition}

We define now the notion of anti-piracy proofs for a generator $G$.

\begin{definition}[Anti-piracy proof for generator $G$]
\label{def:uncloneable_proof_generator}
Let $A = (A_{yes},A_{no})$ be a promise problem and $G$ be a generator for $A$.
We say that a non-interactive proof system  $(P, V)$ 
for $A$ is anti-piracy for $G$, 
if for every $\poly(|x|,\lambda)$-bounded adversary $\calA$ and admissible verifiers $V_1$ and $V_2$ for $A$, we have that: 
\begin{align}
    Pr_{x \sim G(1^n)}\left[
      V^x_1 \otimes V^x_2 (\tilde\psi_{A,B}) = (1,1)
 \quad \bigg| \quad
\substack{
  \ket{\psi}
  \leftarrow P(x) \\
  \tilde{\psi}_{A,B} \leftarrow
  \mathcal{A}(x,\ket{\psi})
}
      \right] \leq \negl(\lambda),
\end{align}
where the probability space is also over the measurement outcomes of $V_1$ and $V_2$.
\end{definition}

Finally, we can define anti-piracy proof systems for a promise problem.

\begin{definition}[Anti-piracy proof for promise problem $A$]
\label{def:uncloneable_proof}
Let $A = (A_{yes},A_{no})$ be a promise problem.
We say that a proof system $(P, V)$ 
for $A$ is anti-piracy if there is a generator $G$ such that  $(P, V)$ is anti-piracy for $G$.
\end{definition}

\begin{remark}
    We note that in \Cref{def:uncloneable_proof}, the proof system could be constructed using oracles or a common reference string (CRS). In this case, we consider that the adversary $\calA$ or any admissible verifier for the problem has access to the same resources as the honest verifier. Notice that if the honest prover and verifier pair has access to private resources, such as private keys or EPR pairs, this would lead to a trivial anti-piracy proof system but with a designated verifier. In this work, we focus on proof systems with public verification.
\end{remark}

\subsection{Anti-piracy proofs with subspace oracles}
\label{sec:uncloneable_oracle}
In this section, we show an oracle problem for which there is an anti-piracy proof. We start by defining the problem.

\begin{definition}\label{def:uncloneable_language}
Let $A_n,B_n \in \mathbb{F}_2^n$ be subspaces such that $B \subseteq A^\perp$ and $\dim(A) = \dim(B) = n/2$, or $\dim(A) = n/2$  and $\dim(B) = n/4$, or $\dim(A) = n/4$  and $\dim(B) = n/2$.  Let ${A} = (A_1,...)$ and ${B} = (B_1,...)$.  We define a language $L_{{A},{B}}$ as 
    \begin{enumerate}
    \item $0^n \in L_{{A},{B}}$ iff $\dim(A_n) = \dim(B_n) = n/2$
     \item $x \not\in L_{{A},{B}}$ for all $x \ne 0^n$
    \end{enumerate}
 
    For each $n$, we will define the pair of oracles $O^{A_n}$ and $O^{B_n}$ that check membership in $A_n$ and~$B_n$, respectively.
\end{definition}

If we want to check that the support of an $n$-qubit quantum state $\ket{\psi} = \sum_x \alpha_x \ket{x}$ is in $A_n$, we make an oracle call $O^{A_n}$ on $\ket{\psi}\ket{0}$, which results in 
    $\sum_{x \in A_n} \alpha_x \ket{x}{1} + \sum_{x \not\in A_n} \alpha_x \ket{x}\ket{0}$. We can then simply measure the last qubit and if the outcome is $1$, then we accept; or if the outcome is $0$, then we reject. A similar procedure can be defined for $B_n$.

\begin{figure}[h]
\fbox{\begin{minipage}{0.98\textwidth} 
  \textbf{Input:} input $x \in \{0,1\}^n$ and $n$-qubit proof $\ket\psi$
  \begin{enumerate}
    \item If $x \ne 0^n$, reject.
    \item Check if each element of $\ket\psi$ belongs to the subspace $A_n$, reject if not
    \item Apply $H^{\otimes n}$ on the state
    \item Check if each element of the new state belongs to the subspace      $B_n$, reject if not
    \item Accept
  \end{enumerate}
  \end{minipage}}
  \caption{Admissible verification $V^*$ for $L_{A,B}$}
  \label{fig:honest-verification}
\end{figure}

\begin{lemma}\label{lem:honest-verification}
For any $A,B$ satisfying the properties in \Cref{def:uncloneable_language}, let $V^*$ be defined as in \Cref{fig:honest-verification}, and  $P^*$ be the such that on input $x = 0^n$, outputs $\ket{A_n}$. Then $(V^*, P^*)$ is a non-interactive proof system for~$L_{A,B}$.\looseness=-1
\end{lemma}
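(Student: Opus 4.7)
The plan is to observe that, by Definition~\ref{def:admissible}, the term ``non-interactive proof system'' only imposes completeness on the pair $(P^*, V^*)$, so it suffices to analyze the behaviour of $V^*$ when run on $x = 0^n$ with proof $\ket{A_n}$, in the regime where $\dim(A_n) = \dim(B_n) = n/2$ (the only case in which $L_{A,B}$ contains $0^n$). Since $x = 0^n$, Step~1 of $V^*$ does not reject.

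Next, I would verify Steps~2--4 go through with certainty. The honest proof is $\ket{A_n} = \frac{1}{\sqrt{|A_n|}} \sum_{a \in A_n} \ket{a}$, whose support lies entirely in $A_n$, so invoking the oracle $O^{A_n}$ as described after Definition~\ref{def:uncloneable_language} yields outcome $1$ with probability $1$ and leaves the state unchanged up to the measured ancilla. Applying $H^{\otimes n}$ in Step~3 then maps $\ket{A_n}$ to the uniform superposition $\ket{A_n^{\perp}} = \frac{1}{\sqrt{|A_n^{\perp}|}} \sum_{a \in A_n^{\perp}} \ket{a}$, which is a standard identity for subspace states.

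The key algebraic observation is that under the promise $\dim(A_n) = \dim(B_n) = n/2$ together with $B_n \subseteq A_n^{\perp}$, the dimension count $\dim(A_n^{\perp}) = n - n/2 = n/2 = \dim(B_n)$ forces $B_n = A_n^{\perp}$. Hence the state produced in Step~3 has support exactly on $B_n$, and the oracle call $O^{B_n}$ in Step~4 accepts deterministically, so $V^*$ reaches Step~5 and accepts with probability $1$. Since the acceptance probability is $1 \ge 1 - \negl(n)$ for the unique yes-instance $x = 0^n$, this establishes the completeness condition required by Definition~\ref{def:admissible}, and the lemma follows.

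I do not expect any serious obstacle here: the argument is a direct unrolling of the verification procedure combined with the Fourier-duality identity $H^{\otimes n}\ket{A_n} = \ket{A_n^{\perp}}$ and a dimension count. The only mild subtlety is making sure that the oracles $O^{A_n}$ and $O^{B_n}$ act coherently on superpositions supported in the respective subspaces without disturbing them, which follows from the description of subspace-membership oracles given immediately after Definition~\ref{def:uncloneable_language}.
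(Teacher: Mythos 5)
There is a genuine gap. You read Definition~\ref{def:admissible} as saying that ``non-interactive proof system'' imposes only the completeness condition $\Pr[V(x,P(x))=1]\geq 1-\negl(|x|)$, and on that basis you discard soundness entirely. That reading cannot be right: the ``Moreover'' clause builds on the first half of the definition, so a non-interactive proof system is a pair $(P,V)$ in which $V$ is an \emph{admissible} verifier (hence sound on no-instances) and $P$ additionally realizes the completeness guarantee. Under your reading the always-accept verifier would form a ``proof system'' with any prover, and --- more to the point for this paper --- the lemma is later invoked to claim that $(V^*,P^*)$ is an anti-piracy proof system, a statement that is only meaningful if $V^*$ is itself among the admissible verifiers being quantified over. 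The paper's own proof of this lemma accordingly devotes most of its length to soundness.

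Your completeness argument is fine and matches the paper's: when $\dim(A_n)=\dim(B_n)=n/2$ the containment $B_n\subseteq A_n^\perp$ forces $B_n=A_n^\perp$, and $H^{\otimes n}\ket{A_n}=\ket{A_n^\perp}=\ket{B_n}$, so both membership tests pass with certainty. What is missing is the soundness half: for $x\neq 0^n$ Step~1 rejects outright, and for $x=0^n\notin L_{A,B}$ one of the subspaces has dimension $n/4$. In the case $\dim(A_n)=n/2$, $\dim(B_n)=n/4$, any state surviving Step~2 has the form $\ket{\phi}=\sum_{a\in A_n}\alpha_a\ket{a}$; after $H^{\otimes n}$ the amplitude on each $y\in B_n\subseteq A_n^\perp$ is $2^{-n/2}\sum_{a\in A_n}\alpha_a$, and Cauchy--Schwarz gives total mass on $B_n$ at most $|B_n|\cdot 2^{-n}\cdot|A_n| = 2^{-n/4}$, so Step~4 accepts with only negligible probability (the case with the dimensions swapped is analogous). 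Without this calculation the lemma, as the paper intends it and uses it, is not established.
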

\begin{proof}
For completeness, we have that for all $n$ such that $0^n \in L$, $V^{\mathcal{O}^A,\mathcal{O}^{B}}$ accepts the proof $\ket{A_n}$ with probability $1$: the tests in step $2$ and $4$ never reject since $H^{\otimes}\ket{A} = \ket{A^\perp} = \ket{B}$.

To prove soundness, the case where $x \ne 0^n$ is trivial. Let us now consider the case where $x = 0^n \not\in L$. Let us assume that $\dim(A) = n/2$ and $\dim(B) = n/4$ (the other case follows analogously).

 Let us assume that the first test passes (if it does not pass, the verifier
  rejects as desired). We denote state of the verifier after the first test as
  $\ket{\phi} = \sum_{x \in A} \alpha_x \ket{x}$, and we use the fact that the first test has
  passed, so the support of $\ket{\phi}$ is a subset of $A$.

  When $V$ applies $H^{\otimes n}$ to $\ket{\phi}$, the resulting state is
  \begin{align}
     \frac{1}{2^{n/2}}\sum_{y \in \{0,1\}^n} \sum_{x \in A} (-1)^{x \cdot
     y}\alpha_x \ket{y},
  \end{align}
  and the probability of measuring $y \in B$ is
  \begin{align}
    &\sum_{y \in B} 
     \left|\frac{1}{2^{n/2}}\sum_{x \in A} \alpha_x \right|^2 \\
    &
    \leq \frac{1}{2^{n}} \sum_{y \in B} 
     \left|2^{n/2} 
     \frac{1}{2^{n/4}}\right|^2 \\
    &
    = \frac{1}{2^{n/4}}.
  \end{align}
  Thus, the second test passes with negligible probability.
\end{proof}

In order to prove the anti-piracy property, we first prove a result on the queries made by an admissible verifier for the language $L_{A,B}$.

\begin{lemma}\label{lem:intercept-query}
  Let $V^{\mathcal{O}^{A_n},\mathcal{O}^{B_n}}$ 
  be an admissible verifier for
  $L_{A,B}$. Let us also fix some $n$ and some polynomial $p$ such that $0^n \in L_{A,B}$ and some state $\ket{\psi}$ such that $V^{\mathcal{O}^{A_n},\mathcal{O}^{B_n}}$ accepts on $\ket{\psi}$ with probability at least $\frac{1}{p(n)}$.
  
  Then there exists a polynomial $q$ such that for every
  subspace $B_n' \subseteq A_n^\perp$ of dimension $n/4$, there
  exists one query to $\mathcal{O}^{B_n}$ that has $\frac{1}{q(n)}$ mass on elements
  in $B_n \setminus B_n'$. Moreover, for every subspace $A_n' \subseteq A$ of dimension
  $n/4$, there
  exists one query to $\mathcal{O}^{A_n}$ that has non-negligible mass on elements
  in $A_n \setminus A_n'$.
\end{lemma}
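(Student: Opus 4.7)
The plan is to prove both statements by contradiction, using the soundness of the admissible verifier together with a BBBV-style hybrid over oracle queries. Focusing on the $B_n$ part first: suppose for contradiction that there is a subspace $B_n' \subseteq A_n^\perp$ of dimension $n/4$ such that every query of $V^{\mathcal{O}^{A_n},\mathcal{O}^{B_n}}(\ket{\psi})$ to $\mathcal{O}^{B_n}$ places squared amplitude at most $1/q(n)$ on $B_n \setminus B_n'$, for every polynomial $q$. I will construct an ``oracle twin'' under which $V$ is forced to reject, yet whose output is statistically indistinguishable from the real execution, contradicting the $1/p(n)$ acceptance hypothesis. The twin world replaces $\mathcal{O}^{B_n}$ by the membership oracle $\mathcal{O}^{B_n'}$ and leaves $\mathcal{O}^{A_n}$ unchanged; the configuration $(A_n, B_n')$ satisfies $B_n' \subseteq A_n^\perp$ with $\dim(A_n) = n/2$ and $\dim(B_n') = n/4$, so under this pair of oracles the input $0^n$ is a valid no-instance of $L_{A,B}$ in the sense of \Cref{def:uncloneable_language}. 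Admissibility of $V$ then forces
\begin{align}
    \Pr\!\bigl[V^{\mathcal{O}^{A_n},\mathcal{O}^{B_n'}}(0^n, \ket{\psi}) = 1\bigr] \le \negl(n),
\end{align}
so the two oracle worlds must be distinguishable with advantage at least $1/p(n) - \negl(n)$.

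To upper bound this advantage, I invoke the BBBV query-magnitude lemma. Letting $\ket{\phi_t}$ be the state of $V$ immediately before its $t$-th query to $\mathcal{O}^{B_n}$ in the real world and $T = \poly(n)$ the total query count, the total variation distance between the two final output distributions is at most
\begin{align}
    O\!\Biggl(\sqrt{T\sum_{t=1}^{T} \Bignorm{\Pi_{B_n \setminus B_n'}\ket{\phi_t}}^{2}}\,\Biggr),
\end{align}
where $\Pi_{B_n \setminus B_n'}$ projects the query register onto basis states lying in $B_n \setminus B_n'$. Under the hypothesised per-query bound, this distance is negligible, contradicting the distinguishing advantage established above. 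Hence some query must place mass at least $1/q(n)$ on $B_n \setminus B_n'$ for a polynomial $q$ depending only on $p$ and $T$, and is therefore uniform in the choice of $B_n'$. The $A_n$ statement follows by a symmetric construction: replacing $\mathcal{O}^{A_n}$ with the membership oracle for $A_n' \subseteq A_n$ of dimension $n/4$ produces a twin world with $\dim(A') = n/4$ and $\dim(B) = n/2$, which is again a no-instance, while $B_n \subseteq A_n^\perp \subseteq (A_n')^\perp$ guarantees that the configuration remains legal.

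The chief obstacle I anticipate is a clean formalisation of the BBBV bound for the specific form of our subspace-membership oracles, which flip an answer qubit controlled on the query register lying in $B_n$ (respectively $A_n$) rather than matching the textbook ``database-lookup'' framing. Decomposing the difference $\mathcal{O}^{B_n} - \mathcal{O}^{B_n'}$ as a controlled operation supported exactly on $B_n \setminus B_n'$ should reduce our setting to the standard statement, but some care is needed to verify that the polynomial $q$ extracted at the end depends only on $p$ and on the query complexity of $V$, and in particular not on the choice of $B_n'$, so as to match the universal quantification in the conclusion.
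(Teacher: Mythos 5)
Your proposal is correct and follows essentially the same route as the paper's proof: assume the per-query mass on $B_n \setminus B_n'$ is negligible, swap $\mathcal{O}^{B_n}$ for $\mathcal{O}^{B_n'}$ to turn $0^n$ into a no-instance, and derive a contradiction with soundness via a query-by-query hybrid argument (the paper does the hybrid by a direct triangle-inequality induction rather than citing the BBBV bound, but this is the same idea). Your closing remarks about the oracle's controlled-flip form and the uniformity of $q$ over the choice of $B_n'$ are legitimate care points that the paper's own proof also glosses over.
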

\begin{proof}
  Let us suppose that the statement is false. In this case, we have that there is a negligible function $\eps$ such that  every query to $\mathcal{O}^{B_n}$ has $\eps(n)$ mass on elements
  in $B_n \setminus B_n'$.
  
  We will show that we can replace the
  oracle $B_n$ with an oracle $B_n' \subseteq A_n^\perp$ of dimension $n/4$, and that $V^{\mathcal{O}^{A_n},\mathcal{O}^{B_n'}}$ accepts $0^{n}$ with
  probability $\frac{1}{p(n)} - \eps(n)$, contradicting the assumption that $V^{\mathcal{O}^{A_n},\mathcal{O}^{B_n'}}$ is an admissible verifier of $L_{A,B'}$.

  Without loss of generality, we assume that $V^{\mathcal{O}^{A_n},\mathcal{O}^{B_n}}$  can be decomposed as 
  \begin{align} U_{T+1} \mathcal{O}^{B_n} U^B_{T} ... \mathcal{O}^{B_n} U^B_{1}{\mathcal{O}^{A_n}}U^A_{1}\mathcal{O}^{B_n} U^B_0\mathcal{O}^{A_n}U^A_0,
  \end{align}
where $U^A_i$ and $U^B_i$ are arbitrary unitaries.

Let us define 
  \begin{align*} \ket{\psi_i} = U^B_{i} ... \mathcal{O}^{B_n}  U^B_0\mathcal{O}^{A_n}U^A_0\ket{x}\ket{\psi}\ket{0} \quad \text{ and } \quad \ket{\phi_i} = U^B_{i} ... \mathcal{O}^{B_n'} U^B_{1}{\mathcal{O}^{A_n}}U^A_{1}\mathcal{O}^{B_n'} U^B_0\mathcal{O}^{A_n}U^A_0\ket{\psi}\ket{0}.
  \end{align*}

  We will show by induction that there exists a negligible function $\eps(n)$ such that for every $0 \leq i \leq T$
  \begin{align} \label{eq:induction-step-unclonable}
      \norm{\ket{\psi_i} - \ket{\phi_i}} \leq i\eps(n).
  \end{align}

  Notice that from \Cref{eq:induction-step-unclonable}, we can prove our statement since $\ket{\phi_T}$ is the state of the verifier $V^{\mathcal{O}^{A_n},\mathcal{O}^{B_n'}}$ on input $\ket{\psi}\ket{0}$. In this case, the acceptance probability of the no-instance $0^n$ of $L_{A,B'}$ would be $\frac{2}{3} -\negl(n)$ and therefore the verification is not sound, which is a contradiction.

We now prove \Cref{eq:induction-step-unclonable}.
  For the base case $i = 0$, we have that $\ket{\psi_0} = \ket{\phi_0}$, and therefore the statement holds.

  Let us now assume that \Cref{eq:induction-step-unclonable} holds for some $i$. Then we have that
  \begin{align*} 
      &\norm{\ket{\psi_{i+1}} - \ket{\phi_{i+1}}} \\
      &= \norm{U^B_{i+1}{\mathcal{O}^{A_n}}U^A_{i+1}\mathcal{O}^{B_n}\ket{\psi_{i}} - U^B_{i+1}{\mathcal{O}^{A_n}}U^A_{i+1}\mathcal{O}^{B_n'}\ket{\phi_{i}}} \\
      &=  \norm{\mathcal{O}^{B_n}\ket{\psi_{i}} - \mathcal{O}^{B_n'}\ket{\phi_{i}}} \\
      &\leq \norm{\mathcal{O}^{B_n}\ket{\psi_{i}} - \mathcal{O}^{B_n'}\ket{\psi_{i}}} +
      \norm{\mathcal{O}^{B_n'}\ket{\psi_{i}} - \mathcal{O}^{B_n'}\ket{\phi_{i}}} \\
      &\leq \norm{\mathcal{O}^{B_n}\ket{\psi_{i}} - \mathcal{O}^{B_n'}\ket{\psi_{i}}} +
      i\eps(n) \\
      &\leq \eps(n) +  i\eps(n) \\
      &= (i+1)\eps(n),
  \end{align*}
  where the second equality follows by unitary invariance of the norm, the first inequality follows by the triangle inequality, and the second inequality follows from the induction hypothesis. We argue now the third inequality.
Notice that by the assumption, all queries to $\mathcal{O}^{B_n}$ have $\eps(n)$ mass on $B_n \setminus B_n'$, i.e.,  $\sum_{x\in B\setminus B_n', b, y} |a_{x,b,y}|^2 \leq \eps(n)$. In this case, 
$\norm{\mathcal{O}^{B_n}\ket{\psi_i} - \mathcal{O}^{B_n'}\ket{\psi_i}} \leq \eps(n)$. 
\end{proof}

We now state a key technical lemma from Ben-David and Sattath.

\begin{lemma}[Theorem 16 of~\cite{ben2023quantum}]
  \label{lem:BDS}

  Let A be a uniformly random subspace, and let  $\epsilon > 0$ be such that
  $1/\epsilon = o (2^{n/2})$. Given one copy of $\ket{A}$ and a quantum membership oracle 
 $A$ and $A^\perp$, a
  counterfeiter needs $\Omega(\sqrt{\epsilon}2^{n/4})$ queries to output a pair
  $(a,b) \in A \times A^\perp$.
\end{lemma}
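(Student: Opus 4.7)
The plan is to prove this lower bound by combining a reduction that extracts the value of the extra copy of $\ket{A}$, with a standard Grover-type lower bound on the remaining oracle search problem. I would put the counterfeiter in canonical form: a circuit alternating arbitrary unitaries with queries to $\mathcal{O}^A$ and $\mathcal{O}^{A^\perp}$ applied to the initial register $\ket{A}\otimes\ket{0^m}$, followed by a single terminal measurement producing $(a,b)$; intermediate measurements are deferred by purification.

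The core reduction shows that one copy of $\ket{A}$ is worth at most one classical sample from either $A$ or $A^\perp$, but not both, because the two are related by the Fourier transform $\Had^{\otimes n}\ket{A}=\ket{A^\perp}$ and standard-basis versus Hadamard-basis measurements are mutually exclusive. Concretely, I would post-select on the output register for $a$: if the counterfeiter produces $(a,b)\in A\times A^\perp$ with probability $\epsilon$, then measuring the $a$-register first yields, by a swap-style argument, a counterfeiter that on input a single classical sample $a_0\in A$ (obtained by measuring the copy of $\ket{A}$ in the standard basis) outputs $b\in A^\perp$ with probability $\Omega(\epsilon)$, using oracle access only. The key technical content is that, for uniformly random $A$, replacing the quantum resource $\ket{A}$ by its measured version loses at most a constant factor in success probability averaged over $A$; this can be handled by a hybrid argument that tracks the amplitudes assigned to each candidate subspace after each oracle call.

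For the residual task I would apply the tight quantum query lower bound for unstructured search. Once $a_0$ is fixed, $A^\perp$ is a uniformly random subspace of dimension $n/2$ inside the $(n-1)$-dimensional orthogonal complement of $a_0$, so its elements form a marked set of density roughly $2^{-n/2}$ in $\mathbb{F}_2^n$. The adversary method (or the polynomial method of Beals--Buhrman--Cleve--Mosca--de Wolf) then gives $\Omega(\sqrt{\epsilon}\cdot 2^{n/4})$ queries to $\mathcal{O}^{A^\perp}$. The companion oracle $\mathcal{O}^A$ is useless for this residual task because for a uniform $A$ the set $A\cap A^\perp$ collapses to $\{0^n\}$ almost surely, so queries to $\mathcal{O}^A$ yield information independent of the location of nonzero elements of $A^\perp$.

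The main obstacle will be the reduction step: justifying that the single copy of $\ket{A}$ genuinely collapses to a single classical sample without providing any further boost to the counterfeiter. The cleanest approach is to prepare $\ket{A}$ coherently from its classical description inside a purification of the ensemble over $A$, then run a hybrid between (i) the real counterfeiter receiving $\ket{A}$ and (ii) a simulated counterfeiter that first measures $\ket{A}$ to obtain $a_0$, feeding this into a Grover-style procedure. Bounding the hybrid distance via monogamy of the subspace-state ensemble closes the argument, at which point the $\sqrt{\epsilon}\cdot 2^{n/4}$ bound follows from the optimality of Grover search.
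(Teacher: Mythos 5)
The paper does not actually prove this lemma: it is imported verbatim as Theorem~16 of Ben-David and Sattath~\cite{ben2023quantum}, so there is no in-paper argument to compare against. Judged on its own, your sketch has two genuine gaps.

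First, the step ``replacing the quantum resource $\ket{A}$ by its measured version loses at most a constant factor'' is not something a hybrid argument gives you for free --- it is essentially the entire content of the theorem. The difficulty is precisely that a coherent copy of $\ket{A}$ can be measured in the standard basis to yield an element of $A$, or in the Hadamard basis to yield an element of $A^\perp$, or handled in any adaptive superposition of these interleaved with oracle calls; the assertion that one may fix a standard-basis measurement up front at only constant loss is the ``you cannot get both'' statement you are trying to establish. Appealing to ``monogamy of the subspace-state ensemble'' to close the hybrid is circular, since that monogamy property is exactly what Theorem~16 asserts.

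Second, the residual task is not unstructured search, so the Grover/BBBV optimality bound does not apply off the shelf. The marked set $A^\perp\setminus\{0^n\}$ is a random \emph{subspace} --- a highly structured set --- and the counterfeiter also holds $\mathcal{O}^A$, which determines $A^\perp$ completely; your claim that $\mathcal{O}^A$ is useless because $A\cap A^\perp=\{0^n\}$ is therefore incorrect, as the two oracles are perfectly correlated through the hidden subspace and queries to either reveal information about both. Handling a structured marked set together with two correlated membership oracles and an auxiliary copy of the state is exactly why Ben-David and Sattath introduce their inner-product adversary method rather than quoting the unstructured-search lower bound. A correct proof must either reproduce that machinery or give a genuine reduction to a known lower bound for this structured ensemble; neither appears in your sketch.
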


\begin{theorem}
Let $V^*$ and $P^*$ be defined as \Cref{lem:honest-verification}. Then $(V^*, P^*)$ is an anti-piracy proof system for~$L_{A,B}$.
\end{theorem}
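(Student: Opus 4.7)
The plan is to reduce the anti-piracy property to the Ben-David--Sattath lower bound (\Cref{lem:BDS}). I first define the generator $G$ so that, on input $1^{n}$, it samples a uniformly random subspace $A_{n}\subseteq\mathbb{F}_{2}^{n}$ of dimension $n/2$, sets $B_{n}:=A_{n}^{\perp}$ (so that the yes-instance dimension condition in \Cref{def:uncloneable_language} is met), and outputs $x=0^{n}$ together with the oracles $\mathcal{O}^{A_{n}}$ and $\mathcal{O}^{B_{n}}=\mathcal{O}^{A_{n}^{\perp}}$. This always produces a yes-instance of $L_{A,B}$, and the distribution on $A_{n}$ coincides with that required by \Cref{lem:BDS}.

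Suppose, for contradiction, that there exist a polynomial-time pirate $\calA$, admissible verifiers $V_{1},V_{2}$ for $L_{A,B}$, and a polynomial $p$ such that, for a non-negligible fraction of samples of $G$, both verifiers simultaneously accept on the two registers produced by $\calA(0^{n},\ket{A_{n}})$ with probability at least $1/p(n)$. I construct a counterfeiter $C$ that violates \Cref{lem:BDS} as follows. $C$ receives one copy of $\ket{A_{n}}$ together with oracle access to $\mathcal{O}^{A_{n}}$ and $\mathcal{O}^{A_{n}^{\perp}}$. It simulates $\calA(0^{n},\ket{A_{n}})$ using these oracles to obtain the bipartite state $\tilde\psi_{R_{1},R_{2}}$. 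Let $T=\poly(n)$ be a bound on the number of queries made by each $V_{i}$. Then $C$ samples uniform indices $t_{1},t_{2}\in[T]$, runs $V_{1}$ on $R_{1}$ up to the step just before its $t_{1}$-th query to $\mathcal{O}^{A_{n}}$ and measures the query register in the computational basis to get $a$, and symmetrically runs $V_{2}$ on $R_{2}$ up to its $t_{2}$-th query to $\mathcal{O}^{A_{n}^{\perp}}$ and measures to get $b$. $C$ outputs $(a,b)$.

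For the analysis, fix any $n/4$-dimensional subspaces $A_{n}'\subseteq A_{n}$ and $B_{n}'\subseteq A_{n}^{\perp}$. By the contradiction hypothesis, each marginal of $\tilde\psi$ causes the corresponding verifier to accept with probability at least $1/p(n)$, so \Cref{lem:intercept-query} applied separately to $V_{1}$ and $V_{2}$ yields polynomials $q_{1},q_{2}$ and query indices $t_{1}^{*},t_{2}^{*}$ such that the $t_{1}^{*}$-th query of $V_{1}$ to $\mathcal{O}^{A_{n}}$ has mass at least $1/q_{1}(n)$ on $A_{n}\setminus A_{n}'\subseteq A_{n}$, and the $t_{2}^{*}$-th query of $V_{2}$ to $\mathcal{O}^{A_{n}^{\perp}}$ has mass at least $1/q_{2}(n)$ on $A_{n}^{\perp}\setminus B_{n}'\subseteq A_{n}^{\perp}$. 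With probability at least $1/T^{2}$ the guesses $(t_{1},t_{2})$ of $C$ hit $(t_{1}^{*},t_{2}^{*})$, and then the measurements yield $a\in A_{n}$ and $b\in A_{n}^{\perp}$ with probability at least $1/(q_{1}(n)q_{2}(n))$. Therefore $C$ outputs a valid pair in $A_{n}\times A_{n}^{\perp}$ with inverse-polynomial probability using only $\poly(n)$ queries, contradicting the $\Omega(\sqrt{\eps}\,2^{n/4})$ query lower bound in \Cref{lem:BDS}.

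The main technical hurdle I foresee is applying \Cref{lem:intercept-query}, which is stated for a pure proof state, to the pirate's marginals on $R_{1}$ and $R_{2}$, which may be mixed due to entanglement between the two registers. I expect to handle this by decomposing each marginal as a convex combination of pure states and invoking the lemma on the branches whose acceptance probability is non-negligible; since the overall acceptance probability is at least $1/p(n)$, such branches must carry non-negligible weight. A secondary subtlety is ensuring that the oracles given to $\calA,V_{1},V_{2}$ inside the simulation coincide with the oracles $C$ has access to: because $B_{n}=A_{n}^{\perp}$ in the yes-case produced by $G$, the simulated oracle $\mathcal{O}^{B_{n}}$ is exactly $C$'s own $\mathcal{O}^{A_{n}^{\perp}}$, so the simulation is faithful and no extra oracle book-keeping is required.
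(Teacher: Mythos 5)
Your proposal follows essentially the same route as the paper's proof: take the generator that outputs $0^n$ on the yes-instances (with $A_n$ uniformly random of dimension $n/2$ and $B_n=A_n^\perp$, as \Cref{lem:BDS} requires), use \Cref{lem:intercept-query} to argue that measuring a uniformly random oracle query of each accepting admissible verifier extracts a useful subspace element with inverse-polynomial probability, and then feed the resulting pair $(a,b)\in A_n\times A_n^\perp$ into the query lower bound of \Cref{lem:BDS} to reach a contradiction. In places you are more explicit than the paper (you specify that $a$ is harvested from $V_1$'s queries to $\mathcal{O}^{A_n}$ and $b$ from $V_2$'s queries to $\mathcal{O}^{A_n^\perp}$, and you use the subspaces $A_n',B_n'$ to guarantee the extracted vectors are nontrivial), so the core reduction is the intended one.

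One step is stated too strongly. You claim the two measurements ``yield $a\in A_n$ and $b\in A_n^\perp$ with probability at least $1/(q_1(n)q_2(n))$,'' which multiplies the two success probabilities as if the extraction events on $R_1$ and $R_2$ were independent; since $\tilde\psi_{R_1,R_2}$ may be entangled, the outcomes on the two sides are correlated and the product bound does not follow directly. The hurdle you flag (mixedness of each marginal) is a symptom of this, but decomposing each marginal separately into pure branches only controls each event on its own, not the joint event. The paper resolves exactly this by a sequential replacement: it first shows
\begin{equation}
\Pr[M\otimes \tilde V_2 = (1,1)] \;\geq\; \tfrac{1}{p(n)}\,\Pr[\tilde V_1\otimes \tilde V_2 = (1,1)],
\end{equation}
applying \Cref{lem:intercept-query} to the state on $R_1$ while $\tilde V_2$ acts on $R_2$, and then, because the extraction POVM $M$ and $\tilde V_2$ act on disjoint registers and hence commute, it applies the lemma a second time to replace $\tilde V_2$ by $M$, losing another $1/p(n)$ factor before invoking \Cref{lem:BDS}. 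Recasting your analysis in this conditioned, two-step form closes the gap, and the rest of your reduction goes through.
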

\begin{proof}
We will consider a generator $G$ such that $G(1^n)$ outputs $0^n$, if $0^n \in L_{A,B}$ or $\bot$ otherwise. 

Let $n$ be such $0^n \in L_{A,B}$ and $\calA$ be an adversary that receives $\psi$ created by the prover, has
oracle access to $\calO$, and outputs $\tilde \psi_{A,B}$.
Let $\tilde{V}_1^{\mathcal{O}^{A_n},\mathcal{O}^{B_n}}$ and $\tilde{V}_2^{\mathcal{O}^{A_n},\mathcal{O}^{B_n}}$ be two admissible verifiers for  $L_{A,B}$ with the input $0^n$ hardcoded. We need to show that 
\begin{align}
    Pr\left[
      \tilde{V}_1^{\mathcal{O}^{A_n},\mathcal{O}^{B_n}} \otimes \tilde{V}_2^{\mathcal{O}^{A_n},\mathcal{O}^{B_n}} (\tilde\psi_{A,B}) = (1,1)
 \quad \bigg| \quad
\substack{
  \mathcal{O}, \psi
  \leftarrow P \\
  \tilde{\psi}_{A,B} \leftarrow
  \mathcal{A}^{\mathcal{O}^{A_n},\mathcal{O}^{B_n}}(\psi)
}
      \right] \leq \negl(\lambda).
\end{align}

  Let $M$ be the POVM where we pick one of the queries uniformly at random,
  measure the input register and accept iff the measurement outcome is a vector
  in the subspace.

   By \Cref{lem:intercept-query}, we have that
\begin{align}
 & Pr\left[
    M \otimes \tilde{V}_2^{\mathcal{O}} (\tilde\psi_{A,B}) = (1,1)
 \quad \bigg| \quad
\substack{
  \mathcal{O}, \psi
  \leftarrow P \\
  \tilde{\psi}_{A,B} \leftarrow
  \mathcal{A}^{\mathcal{O}}(\psi)
}
      \right]
    \\ &\geq
 \frac{1}{p(n)} \cdot Pr\left[
    \tilde{V}_1^{\mathcal{O}} \otimes \tilde{V}_2^{\mathcal{O}} (\tilde\psi_{A,B}) = (1,1)
 \quad \bigg| \quad
\substack{
  \mathcal{O}, \psi
  \leftarrow P \\
  \tilde{\psi}_{A,B} \leftarrow
  \mathcal{A}^{\mathcal{O}}(\psi)
}
      \right] . \end{align}

  Notice that since $M$ and $\tilde{V}_2^{\mathcal{O}}$ commute, we can use
  \Cref{lem:intercept-query} again and we have that
\begin{align}
 & Pr\left[
      M \otimes M (\tilde\psi_{A,B}) = (1,1)
 \quad \bigg| \quad
\substack{
  \mathcal{O}, \psi
  \leftarrow P \\
  \tilde{\psi}_{A,B} \leftarrow
  \mathcal{A}^{\mathcal{O}}(\psi)
}
      \right] \\
  & \geq
\frac{1}{p(n)}  Pr\left[
    M \otimes \tilde{V}_2^{\mathcal{O}} (\tilde\psi_{A,B}) = (1,1)
 \quad \bigg| \quad
\substack{
  \mathcal{O}, \psi
  \leftarrow P \\
  \tilde{\psi}_{A,B} \leftarrow
  \mathcal{A}^{\mathcal{O}}(\psi)
}
      \right] 
    \\ 
    & \geq 
     \frac{1}{p(n)^2} \cdot Pr\left[
    \tilde{V}_1^{\mathcal{O}} \otimes \tilde{V}_2^{\mathcal{O}} (\tilde\psi_{A,B}) = (1,1)
 \quad \bigg| \quad
\substack{
  \mathcal{O}, \psi
  \leftarrow P \\
  \tilde{\psi}_{A,B} \leftarrow
  \mathcal{A}^{\mathcal{O}}(\psi)
}
      \right]
 . \end{align}

  By \Cref{lem:BDS}, we have that 
\begin{align}
  Pr\left[
      M \otimes M (\tilde\psi_{A,B}) = (1,1)
 \quad \bigg| \quad
\substack{
  \mathcal{O}, \psi
  \leftarrow P \\
  \tilde{\psi}_{A,B} \leftarrow
  \mathcal{A}^{\mathcal{O}}(\psi)
}
      \right] 
\leq \negl(\lambda), \end{align}
and therefore 
\begin{align}
Pr\left[
    \tilde{V}_1^{\mathcal{O}} \otimes \tilde{V}_2^{\mathcal{O}} (\tilde\psi_{A,B}) = (1,1)
 \quad \bigg| \quad
\substack{
  \mathcal{O}, \psi
  \leftarrow P \\
  \tilde{\psi}_{A,B} \leftarrow
  \mathcal{A}^{\mathcal{O}}(\psi)
}
      \right] \leq \negl(n),
\end{align}
proving our statement.
\end{proof}

\subsection{Candidates for anti-piracy proofs for $\NP$}
\label{sec:uncloneable_np}

We propose here a candidate scheme for anti-piracy proofs beyond the oracle model, for languages in $\NP$. The main idea is that the quantum proof consists in three parts:
\begin{enumerate}
    \item some auxiliary information provided by the prover;
    \item a potentially cloneable proof $\pi$ that shows that either the input is in the language or the auxiliary information has some {\em bad properties}, and, crucially, $\pi$ hides which is the case; 
    \item an uncloneable quantum state that certifies that the auxiliary information does {\em not} have bad properties.
\end{enumerate}

In order to implement such elements, we will make use of cryptographic primitives that we define here informally.

The first object that we will need is non-interactive zero-knowledge proofs (NIZK). In particular, we will consider NIZKs in the common reference string (CRS) model. Here, we have a prover $P$ and a verifier $V$, which both have access to a trusted reference string $\textsf{crs}$ that is sampled from a pre-defined distribution. The properties that we want from the proof system is:
\begin{itemize}
    \item[] \textbf{completeness:} If $x \in L$, then $V(x,P(x,\textsf{crs}),\textsf{crs}) = 1$;
    \item[] \textbf{soundness:} If $x \not\in L$, then for every $\tilde{P}$, $Pr[V(x,\tilde P(x,\textsf{crs}),\textsf{crs}) = 1] = \negl(|x|)$;
    \item[] \textbf{computational zero-knowledge:} There exists a polynomial-time simulator $\mathcal{S}$ such that
    $(\textsf{crs},P(x,\textsf{crs}))$ is computationally indistinguishable from $\mathcal{S}(x)$.
\end{itemize}

The second cryptographic notion that we need is indistinguishability obfuscation (iO). Here, we say that a polynomial-time algorithm $\mathcal{O}$ is an iO for a circuit class $\mathcal{C}$ if:
\begin{itemize}
    \item For any circuit $C \in \mathcal{C}$, $\mathcal{O}(C)$ computes the same function as $C$ with at most a polynomial slowdown;
    \item For any two circuits $C,C' \in \mathcal{C}$ that compute the same function, $\mathcal{O}(C)$ is computationally indistinguishable from $\mathcal{O}(C')$.
\end{itemize}

We assume here an iO scheme $\mathcal{O}$ for circuits $C_A$, where $A \subseteq \mathbb{F}_2^n$, that compute the membership of $A$. \cite{Zha19,Zha21} showed that given $\ket{A}, \mathcal{O}(C_A), \mathcal{O}(C_{A^\perp})$, where $dim(A) = n/2$, no polynomial-time algorithm can output $\ket{A}^{\otimes 2}$ with non-negligible probability.

Using the recipe described in steps $1$ to $3$ above, the auxiliary information in our proposal consists of obfuscations of a subspace state and its dual, both of dimension $n/2$. Then, the bad property is that one of the subspaces has dimension $n/2$ and the second one dimension $n/4$. Finally, the subspace state is the uncloneable quantum state that certifies that the dimensions are correct. Using these tools, our candidate is given in \Cref{fig:candidate-proof-system}.

\begin{figure}[h]
\fbox{\begin{minipage}{0.98\textwidth}

Fix a language $L \in \NP$. We define the language $L_{sub}$ as

\begin{align}
  L_{sub} = \{ (x,O_1,O_2) : & \exists B, C \subseteq \mathbb{F}_2^n \text{ s.t. 
  }\\ & O_1 = \mathcal{O}(B)  
   \wedge O_2 = \mathcal{O}(C) 
  \wedge B \subseteq C^{\perp} \\
  &\wedge (\dim(B) = n/2 \vee \dim(C) = n/2) \\
  &\wedge (x \in L \vee \dim(B) = n/4 \vee \dim(C) = n/4)  \}
\end{align}

The scheme that we consider is the following:
\begin{enumerate}
\item The Prover and the Verifier have access to a CRS.
\item The Prover chooses a subspace $A$, computes $O_1 = \mathcal{O}(A)$, $O_2 = \mathcal{O}(A^\perp)$
\item The Prover computes a NIZK proof $\pi$ that $(x,O_1,O_2) \in L_{sub}$
\item The Prover sends $(O_1, O_2, \pi, \ket{A})$ to the Verifier
\item The Verifier accepts iff $\pi$ passes the ZK verification and $\ket{A}$ passes the verification with $O_1$ and $O_2$ (similarly to \Cref{fig:honest-verification}).
\end{enumerate}
\end{minipage}}
  \caption{Candidate anti-piracy proof system for $\NP$.}
  \label{fig:candidate-proof-system}
\end{figure}

We provide now the intuition on why we think that such a scheme is an anti-piracy proof system. First, notice that the pirate cannot trivially provide two copies of the proof because the state $\ket{A}$ is uncloneable. Moreover, the zero-knowledge proof alone cannot be used to verify that $x$ belongs to the language since soundness crucially requires checking the subspace state. Therefore, it seems that the quantum subspace state is making the cloneable part of the proof useless on its own.

We explain now the difficulties that we have in proving the security of this scheme. First of all, the only assumption that we have from admissible verifiers is completeness and soundness. Therefore, we do not have many handles to argue that one of the verifiers has no useful information on the subspace state.

Moreover, to prove the security of our protocol, we would need to prove that if the cryptographic objects that we use are secure, then either we can break the soundness of the protocol, or the language is easy (i.e. the pirate is cooking up the witness by herself). Proving the first conclusion is hard since the pirate can detect when we move from a yes-instance to a no-instance, or even harder, the pirate has a yes-instance in mind and creates an accepting good witness for a no-instance. 

To conclude the easiness of the language, we would need:
\begin{enumerate}
    \item the state provided to the pirate to be easily generated (instead of the original witness);
    \item even with the new state the pirate outputs a state that passes the admissible verification.
\end{enumerate}
The only route that we can see goes through using the simulation coming from the ZK property, which is done by creating a proof coupled with a simulated CRS. However, the state generated by the pirate would need to pass completeness and soundness using the real CRS.

\section{Complexity implications of cloneable proofs}   
\label{section:complexity_of_piracy}

In this section, we discuss the implications of the existence of cloneable proofs for $\QMA$, $\QMA(2)$, and, more generally, $\QMA(k)$.

\subsection{Cloneable quantum proofs}  

Recently, \cite{NZ24} defined the complexity class $\CQMA$ to denote the promise problems in $\QMA$ that have cloneable witnesses. 
We reproduce it below. 

\begin{definition}[$\CQMA_p$ \cite{NZ24}] 
A decision problem $L = (L_{\yes}, L_{\no})$ is said to be in the class $\CQMA(c,f,s)$ if there exists a polynomial-time quantum verifier $V$ (uniformly generated), a polynomial time quantum cloner $C$ (also uniformly generated), and a polynomial $p$, such that
\begin{itemize} 
\item \textbf{Completeness:} if $x \in L_{\yes}$, then there exists a quantum witness $\ket{\psi}$ on $p(|x|)$ qubits such that $\Pr[V(\ket{x},\ket{\psi}) = 1] \geq c$, and \\
\textbf{Cloning fidelity:} when given this same witness, $\ket{\psi}$, as input, $C$ succeeds at producing two independent copies of $\ket{\psi}$ with fidelity at least $f$. 
That is, 
\begin{equation} 
\bra{\psi} \otimes \bra{\psi} \; C(\kb{\psi} \otimes \kb{0}) \; \ket{\psi} \otimes \ket{\psi} \geq f. 
\end{equation} 
\item \textbf{Soundness:} if $x \in L_{\no}$, then for all quantum states $\ket{\psi^*}$ on $p(|x|)$ qubits, we have that $\Pr[V(\ket{x}, \ket{\psi^*}) = 1 ]  \leq s$. 
\end{itemize}
\end{definition}   
 
Since classical proofs are cloneable, we have $\QCMA$ (roughly, $\QMA$ with classical witnesses) is contained in $\CQMA$. 
Moreover, $\CQMA$ is contained in $\QMA$, by definition.
(Technically, we have $\CQMA(c,f,s) \subseteq \CQMA(c,f',s)$ where $f \geq f'$, and the equality $\QMA(c,s) = \CQMA(c,0,s)$.)  
Overall, we have the containments 
\begin{equation} 
\QCMA(c,s) \subseteq \CQMA(c,1,s) \subseteq \CQMA(c,0,s) = \QMA(c,s).  
\end{equation}   

\subsection{Cloneable unentangled quantum proofs} 

We now consider the similar setting where the proofs are unentangled, defined below. 
    
\begin{definition}[$\CQMAkp{p}$] 
A decision problem $L = (L_{\yes}, L_{\no})$ is said to be in the class $\CQMAkp{p}(c,f,s)$ if there exists a polynomial-time quantum verifier $V$ (uniformly generated), $k$ polynomial-time quantum cloners $C_1$, \ldots, $C_k$ (each uniformly generated), 
such that: 
\begin{itemize} 
\item \textbf{Completeness:} if $x \in L_{\yes}$, then there exists $k$ quantum witnesses $\ket{\psi_1}$, \ldots, $\ket{\psi_k}$, each on $p(|x|)$ qubits, such that
$\Pr[V(\ket{x},\ket{\psi_1},...,\ket{\psi_k}) = 1] \geq c$, and \\
\textbf{Cloning fidelity:} when given these same witnesses, $\ket{\psi_1}$, \ldots, $\ket{\psi_k}$, as input, $C_i$ succeeds at producing two independent copies of $\ket{\psi_i}$ with fidelity at least $f$. 
That is, 
\begin{equation} 
\bra{\psi_i} \otimes \bra{\psi_i} \; C_i(\kb{\psi_i} \otimes \kb{0}) \; \ket{\psi_i} \otimes \ket{\psi_i} \geq f, \; \text{ for all } i. 
\end{equation}  
\item \textbf{Soundness:} if $x \in L_{\no}$, then for all quantum states $\ket{\psi_1^*}$, \ldots, $\ket{\psi_k^*}$, each on $p(|x|)$ qubits, $\Pr[V(\ket{x},\ket{\psi_1^*},...,\ket{\psi_k^*}) = 1] \leq s$.
\end{itemize} 
\end{definition} 

\begin{remark}\label{rem:cloneable}
We make a few observations about the definition. 
\begin{itemize}
    \item First, note that, just like in $\QMA(k)$, there is no soundness guarantee for entangled proofs in $\CQMAk$.
    \item Second, similar to $\CQMA$, we are only interested in cloning ``good'' proofs, thus the cloning fidelity is only defined in the completeness conditions. 
    \item One could consider alternative definitions of $\CQMAk$. 
For instance, having a separate parameter $f_i$ for each cloning fidelity. 
This flexibility could find other applications, although it is not needed in this work. 
\item We have defined the cloners to act on separate proofs as separate circuits and not as a joint circuit. 
An alternative version with a joint cloner could find application, but one big obstacle in defining the (single) cloner this way is that an imperfect cloner could create entanglement in the proofs. 
While we do consider perfect cloning in this work, it is important to note that in our proofs below, we do end up applying the cloning circuits to proofs even when $x \in L_{no}$. 
Since there is no promise how a cloning circuit behaves for such no instances, it could create entanglement and ruin any soundness promises. 
\end{itemize}
\end{remark}

We begin by noting some immediate containments. 
Firstly, we have 
\begin{equation} 
\CQMAk(c,f,s) \subseteq \CQMAk(c,0,s) = \QMA(k). 
\end{equation} 
Secondly, we have 
\begin{equation} \label{eq24}
\CQMA_p(c,f,s) \subseteq \CQMAkp{p}(c,f,s)
\end{equation} 
since if $\ket{\psi}$ is a cloneable proof for $\QMA$, then $\ket{\psi}\ket{0}^{\otimes (k-1)}$ is a cloneable proof for $\QMA(k)$ using the same $\QMA$ verification and ignoring the $\ket{0}$ proofs.  
To summarize the complexity landscape so far, see the left figure in Figure~\ref{fig:complexity1} in \Cref{sec:intro-cloneable}.

In this section, we investigate the power of the class $\CQMAk$ with completeness-soundness gap $c-s \geq 1/q$, where $q$ is a polynomial-bounded function. 
We only consider cloning fidelity $f = 1$, i.e., perfect cloning. 
For brevity, we define 
\begin{align} 
\CQMAk & := \CQMAk(2/3,1,1/3) \\ 
\CQMA & := \CQMA(2/3,1,1/3). 
\end{align} 

The main result of this section is given in the following theorem.

\begin{theorem} \label{thm:cloneable}
We have 
\begin{equation}  
\CQMA(k) = \CQMA. 
\end{equation}
\end{theorem}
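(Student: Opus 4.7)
The direction $\CQMA \subseteq \CQMAk$ is already established by \Cref{eq24}, so the task reduces to showing $\CQMAk \subseteq \CQMA$. Given a $\CQMAk$ protocol for $L$ with verifier $V$, cloners $C_1,\ldots,C_k$, and honest witnesses $\ket{\psi_1},\ldots,\ket{\psi_k}$, the natural construction is to bundle the $k$ witnesses into a single $\CQMA$ witness
\begin{equation}
    \ket{\Psi} := \ket{\psi_1} \otimes \cdots \otimes \ket{\psi_k},
\end{equation}
naturally partitioned into registers $R_1,\ldots,R_k$, and to use as its $\CQMA$ cloner $C' := C_1 \otimes \cdots \otimes C_k$, each $C_i$ acting on $R_i$ with a fresh ancilla. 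Since each $C_i$ clones $\ket{\psi_i}$ with fidelity $1$, $C'$ clones $\ket{\Psi}$ with fidelity $1$, so the cloneability requirement is immediate.

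For the $\CQMA$ verifier $V'$, I would first amplify $V$ (which is legal in $\CQMAk$) so that its completeness and soundness on product inputs are $1-\negl$ and $\negl$ respectively. On input $\ket{\Phi}$, $V'$ applies $C'$ to produce a state on $(R_1 R_1') \otimes \cdots \otimes (R_k R_k')$, runs the amplified $V$ independently on $R_1 \otimes \cdots \otimes R_k$ and on $R_1' \otimes \cdots \otimes R_k'$, and accepts iff both runs accept. Completeness is clear: on the honest $\ket{\Psi}$, $C'$ outputs $\ket{\Psi}^{\otimes 2}$ exactly, and both $V$-runs accept with overwhelming probability.

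The entire difficulty is soundness for \emph{entangled} inputs, since the $\CQMAk$ guarantee on $V$ only constrains its behaviour on product states. The first step of the analysis is Cauchy--Schwarz, bounding $\Pr[V'\text{ accepts}] = \tr[(E\otimes E)\,\sigma]$ by $\sqrt{\tr[E\sigma_A]\cdot\tr[E\sigma_B]}$, where $E$ is the acceptance POVM of the amplified $V$ and $\sigma_A,\sigma_B$ are the two marginals of the cloner output. When the input is separable, $C' = C_1 \otimes \cdots \otimes C_k$ preserves separability register-by-register, so each marginal is a convex combination of product states and amplified $\CQMAk$ soundness gives a negligible bound on each factor.

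The hard part will be handling pure entangled inputs, where the marginals $\sigma_A,\sigma_B$ themselves remain entangled across the $R_i$'s. I would attack this using the rigidity of \emph{perfect} cloning ($f=1$): by the no-cloning theorem, each $C_i$ must act as a fixed unitary on the span of its cloneable states, which are moreover mutually orthogonal. I expect this structural control to let one either (i) test for entanglement between the $R_i$'s via additional swap tests between $R_i$ and $R_i'$ after the cloner is applied, rejecting any state whose cloner image is not close to the symmetric subspace on each pair; or (ii) iterate the cloner to produce polynomially many copies and apply a symmetric-subspace projection / quantum de Finetti argument to show that the marginal on one copy is close to a mixture of product states. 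Either route reduces the entangled case to the separable one, after which the Cauchy--Schwarz bound above combined with $\CQMAk$ soundness closes the argument.
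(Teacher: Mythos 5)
Your reduction $\CQMA\subseteq\CQMAk$ and the tensor-product cloner construction match the paper, but the heart of the theorem --- soundness of the single-prover verifier against witnesses entangled across the $k$ registers --- is exactly the part you leave unresolved, and the two routes you sketch for it are not the ones that work. First, your opening move of amplifying $V$ ``so that its completeness and soundness on product inputs are $1-\negl$ and $\negl$'' is itself a nontrivial claim: strong amplification for $\QMA(k)$-type classes is not known in general, and the naive parallel repetition either requires the prover to supply many copies (reintroducing entanglement across copies) or uses the cloners on no-instance inputs, where their behaviour is unconstrained and can create entanglement (see \Cref{rem:cloneable}). The paper only gets weak amplification this way (\Cref{lem:cloneable-amplification1}, soundness $1-\tfrac{1}{2q}$), and must then invoke the Harrow--Montanaro product-state test (\Cref{lem:cloneable-two-provers}) to reduce to two provers with a \emph{separable} accepting POVM; separability is what prevents entanglement generation between repetitions and enables perfect parallel repetition via the cloners (\Cref{lem:cloneable-amplification2}), yielding \emph{exponentially} small soundness against product witnesses.

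Second, your proposed fixes for entangled inputs --- swap tests against the symmetric subspace, or iterating the cloner and applying a de Finetti argument --- are speculative (``I expect\ldots'') and face known obstacles: the cloner is an arbitrary circuit on non-witness inputs, so its output structure certifies nothing, and de Finetti bounds carry dimension factors far too large to collapse $\QMAtwo$-type soundness at the $1/\poly$ gap you would have at that stage. The paper's actual resolution is Brand\~{a}o's lemma (\Cref{useful}): if a POVM element accepts every product state with probability at most $\alpha$, it accepts arbitrary states with probability at most $\alpha\cdot 2^{2p}$. One simply drops the unentanglement constraint and pays the $2^{2p}$ factor (\Cref{lem:clonable-single-proof}), which is affordable only because the preceding separable-POVM repetition drove the product-state soundness below $2^{-2p}$. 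In short, you have correctly located the difficulty but not supplied the two ingredients that overcome it: the product test with separable acceptance, and the exponential soundness needed to absorb the dimension factor in Brand\~{a}o's bound.
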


Determining any non-trivial relationship between $\QMA$ and $\QMAtwo = \QMA(k)$ is a big open problem in complexity theory. 
The above result settles this question in the cloneable setting. 
More precisely, when the proofs are cloneable, there is no added power from unentanglement.

\begin{corollary}
If $\CQMAtwo = \QMAtwo$, that is, $\QMAtwo$ can always be made to have cloneable proofs, then $\QMAtwo = \QMA$. 
Moreover, we would also have $\QMA = \CQMA$.  
\end{corollary}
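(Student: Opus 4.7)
Since the corollary is an immediate consequence of \Cref{thm:cloneable}, my plan is to focus on proving that theorem, from which the corollary follows by a short chain of inclusions. For \Cref{thm:cloneable}, the easy direction $\CQMA \subseteq \CQMAk$ follows by padding the witness with $\ket{0}$ proofs, as in \Cref{eq24}. The interesting direction is $\CQMAk \subseteq \CQMA$, and my plan is to simulate a $\CQMAk$ protocol inside a $\CQMA$ one. The $\CQMA$ prover sends the combined product state $\ket{\Psi} = \ket{\psi_1} \otimes \cdots \otimes \ket{\psi_k}$, and the $\CQMA$ cloner is the natural product $C := C_1 \otimes \cdots \otimes C_k$, which perfectly clones $\ket{\Psi}$ since each $C_i$ perfectly clones $\ket{\psi_i}$.

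The difficulty is enforcing the product structure across the $k$ registers, since the soundness of the $\CQMAk$ verifier $V_k$ applies only to product-state witnesses. My fix is that the $\CQMA$ verifier first applies $C$ (internally) to get a state $\sigma$ on $2k$ registers and then performs two tests: (i) the Harrow--Montanaro product test on $\sigma$ (swap tests on each of the $k$ corresponding pairs of registers across the two halves), and (ii) running $V_k$ on the first half of $\sigma$. Completeness is immediate: on the honest witness, $\sigma = \kb{\Psi}^{\otimes 2}$, so both tests accept. For soundness on a no-instance, let $\sigma$ be the state obtained from any dishonest $\rho$: either the product test fails with noticeable probability, or, by the Harrow--Montanaro analysis, $\sigma$ is close to $\kb{\phi}^{\otimes 2}$ for some product state $\ket{\phi}$, in which case $V_k$ rejects on the first half by $\CQMAk$ soundness applied to $\ket{\phi}$.

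The main obstacle I expect is making the product-test analysis quantitatively tight: Harrow--Montanaro gives something like ``acceptance probability $\geq 1-\varepsilon$ implies $O(\sqrt{\varepsilon})$-closeness to two copies of a product state'', so one has to first amplify the $\CQMAk$ completeness/soundness gap (via standard parallel repetition, which preserves product structure and cloneability) so that the combined $\CQMA$ soundness bound stays bounded away from completeness. A secondary subtlety is that the cloner $C$ is only guaranteed to behave well on the honest witness, so on dishonest input the output $\sigma$ is essentially arbitrary; this is harmless because the subsequent tests are applied to $\sigma$ regardless, and the soundness analysis depends only on $\sigma$, not on how it was produced.

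Given \Cref{thm:cloneable}, the corollary is a one-line consequence. Assuming $\CQMAtwo = \QMAtwo$, the theorem with $k=2$ yields $\CQMAtwo = \CQMA$, hence $\QMAtwo = \CQMA$. Combined with the trivial inclusions $\CQMA \subseteq \QMA \subseteq \QMAtwo$, the whole chain collapses to $\QMA = \QMAtwo = \CQMA$, giving both conclusions of the corollary.
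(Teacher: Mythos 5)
Your closing paragraph --- deriving the corollary from \Cref{thm:cloneable} via the chain $\QMAtwo = \CQMAtwo = \CQMA \subseteq \QMA \subseteq \QMAtwo$ --- is exactly right, and is all the paper itself does for this statement. The issue is with your proposed proof of \Cref{thm:cloneable}, which has a genuine gap at the soundness step. You apply the Harrow--Montanaro product test to the state $\sigma$ obtained by running the cloner on the single submitted proof, and conclude that if the test passes then $\sigma$ is close to $\kb{\phi}^{\otimes 2}$ for some product state $\ket{\phi}$. But the product-test guarantee only holds under the promise that the input is unentangled \emph{across the two halves}, i.e.\ of the form $\ket{\psi_1}\otimes\ket{\psi_2}$; that promise is precisely what the two unentangled provers of $\QMAtwo$ supply, and what a single $\CQMA$ prover cannot be forced to supply. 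In the soundness case the cloner's behaviour is completely unconstrained (cloning fidelity is only promised on the honest witness of a yes-instance), so $\sigma$ is an essentially arbitrary state on $2k$ registers, possibly entangled between the two halves, and the product-test analysis does not apply to it. Notice also that your soundness argument never actually uses cloneability --- a dishonest prover could just as well hand you $\sigma$ directly --- so if it were correct it would establish $\QMA(k)\subseteq\QMA$ unconditionally, a famous open problem. That is a strong sign the step fails.

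The paper's route is structured precisely to avoid this trap. It first amplifies the gap (\Cref{lem:cloneable-amplification1}), reduces to two provers with a \emph{separable} accepting POVM (\Cref{lem:cloneable-two-provers}), then uses the cloners \emph{inside the soundness analysis} to drive the two-prover soundness down to $s^{\ell}$ --- separability of the POVM is what guarantees that the post-measurement state remains unentangled, so that the product-state soundness of $V$ keeps applying round after round (\Cref{lem:cloneable-amplification2}) --- and only then drops the unentanglement promise, paying the $2^{2p}$ blow-up from Brand\~{a}o's bound (\Cref{useful,lem:clonable-single-proof}), which the exponentially small soundness absorbs. If you want to repair your write-up, replace the single product-test verifier with this sequence of reductions; your derivation of the corollary from the theorem can stand unchanged.
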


\subsection{Proof of $\CQMAk = \CQMA$} 

In this section we prove \Cref{thm:cloneable} that $\QMA(k)$ and $\QMA$ are equal in the cloneable setting. 
The main idea of our proof is that we can use the completeness-soundness amplification techniques of~\cite{harrow2013testing} while maintaining \emph{similarly-sized} proofs. 
More precisely, we can obtain \emph{strong amplification} for $\CQMAtwo$ which, by a result of Brand\~{a}o (see~\cite{aaronson2008power}), will put it in $\QMA$. 
By careful observation, this reduction puts it into $\CQMA$ by noticing that we are maintaining cloneable proofs throughout the entire series of reductions. 

\begin{lemma}\label{lem:cloneable-amplification1}      
If $c-s \geq 1/q$, for some polynomial bounded function $q$, then we have  
\begin{equation}
\CQMAkp{p}(c,s) \subseteq \CQMAkp{p} \left( 1-2^{-\ell},1-\frac{1}{2q} \right)  
\end{equation} 
for any polynomial-bounded function $\ell$. 
\end{lemma}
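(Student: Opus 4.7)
The plan is to design a new verifier that leverages the cloners $C_1, \ldots, C_k$ internally to perform parallel repetition \emph{without increasing the proof size}. Given the same $k$ proofs of size $p$, the amplified verifier $V'$ first applies each cloner $C_i$ recursively to produce $m = O(\ell \cdot q^2)$ copies of the $i$-th proof; it then runs the original verifier $V$ on each of the $m$ tuples $(\ket{\psi_1^{(j)}}, \ldots, \ket{\psi_k^{(j)}})$ for $j = 1,\ldots,m$, and accepts if the fraction of accepting runs is at least $(c+s)/2$. The cloners advertised for the amplified protocol are the same $C_1,\ldots,C_k$, which still perfectly clone the honest single-copy witnesses, so the proof size stays $p$ and the cloning fidelity is still $1$.

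For completeness, when $x \in L_{\yes}$ and the prover sends the honest proofs $\ket{\psi_1}, \ldots, \ket{\psi_k}$, perfect cloning produces exactly $\ket{\psi_1}^{\otimes m} \otimes \cdots \otimes \ket{\psi_k}^{\otimes m}$, so the $m$ invocations of $V$ act on truly independent copies and their acceptance indicators $X_1,\ldots,X_m$ are i.i.d.\ Bernoulli with mean at least $c$. A standard additive Chernoff bound then gives acceptance probability at least $1 - e^{-m(c-s)^2/2}$, which is at least $1 - 2^{-\ell}$ for $m = O(\ell q^2)$ since $c-s \geq 1/q$.

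For soundness, when $x \in L_{\no}$, the prover sends arbitrary states $\ket{\phi_1}, \ldots, \ket{\phi_k}$; applying the cloners yields a state $\rho_1 \otimes \cdots \otimes \rho_k$ in which each $\rho_i$ may exhibit arbitrary entanglement across its $m$ internal copies. The crucial observation is that the input to the $j$-th run of $V$---obtained by taking the $j$-th copy from each register---is the product state $\rho_1^{(j)} \otimes \cdots \otimes \rho_k^{(j)}$ across the $k$ registers, so soundness of the original $V$ applies and $\E[X_j] \leq s$. By linearity, $\E\bigl[\sum_j X_j\bigr] \leq ms$, and Markov's inequality yields a new soundness of at most $\tfrac{2s}{c+s} = 1 - \tfrac{c-s}{c+s} \leq 1 - \tfrac{1}{2q}$, where the last inequality uses $c+s \leq 2$ and $c-s \geq 1/q$. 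The main obstacle is precisely this soundness step: because the $m$ cloned copies inside a single register can be entangled when the state is adversarial, the $X_j$'s are correlated and one cannot invoke Chernoff on them---this explains why the statement only promises soundness $1 - 1/(2q)$ (obtained via Markov) instead of a matching exponentially small bound.
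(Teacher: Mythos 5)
Your proposal is correct and follows essentially the same route as the paper: use the perfect cloners to manufacture $O(\ell q^2)$ copies of the proofs without enlarging the witness, run $V$ in parallel on each tuple, threshold at $(c+s)/2$, and apply Chernoff for completeness and Markov for soundness. Your explicit observation that the adversarially cloned copies within a register may be entangled (so the acceptance indicators are only bounded in expectation, forcing Markov rather than Chernoff on the soundness side) is exactly the point the paper's sketch leaves implicit, and your bound $2s/(c+s) \le 1 - 1/(2q)$ matches the stated parameters.
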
 

This follows from standard gap-amplification techniques (e.g., in~\cite{kobayashi2003quantum}) with the note that the proofs are still cloneable. 

\begin{proof} 
Choose the function $\ell$ as described in the lemma. 
We now take the verification circuit $V$ from the $\CQMAkp{p}(c,s)$ protocol and the cloning circuits $C_i$ and produce a new verification circuit for $\CQMAkp{p} \left( 1-2^{-\ell},1-\frac{1}{2q} \right)$. 
These will use the same proofs in the yes case, thus will still be cloneable. 
The rest follows from standard gap amplification techniques which we sketch below. 

The new verification takes the form:
\begin{enumerate}
\item Take the $k$ proofs $\ket{\psi^*_1} \otimes \cdots \otimes \ket{\psi^*_k}$ and apply the circuit $C_1 \otimes \cdots \otimes C_k$ to them $N:= 2\ell q^2$ times. 
\item Apply the verification circuit $V$ in parallel $(N+1)$ times. 
\item Accept if and only if $\left(\dfrac{c+s}{2}\right) \cdot (N+1)$ of these $N+1$ verifications accept. 
\end{enumerate} 
Using Chernoff and Markov inequalities, we get the desired completeness and soundness bounds.   
\end{proof} 

We now apply the product-state test from~\cite{harrow2013testing} to reduce the number of provers and to obtain a verification with an accepting POVM which is separable. 
The latter condition is crucial to amplify the completeness-soundness gap in a following lemma. 

\begin{lemma}\label{lem:cloneable-two-provers}
If $c > s$, we have 
\begin{equation} 
\CQMAkp{p}(c,s) \subseteq \CQMAtwosepp{kp} \left( \frac{1+c}{2}, 1 - \frac{(1-s)^2}{100} \right). 
\end{equation} 
Here, the SEP superscript means the accepting POVM of the $\QMAtwo$ verifier is separable. 
\end{lemma}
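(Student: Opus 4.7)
The plan is to combine the Harrow--Montanaro product test~\cite{harrow2013testing} with a direct run of the original $k$-prover verifier, in a way that (i) preserves cloneability of the honest proofs, (ii) yields an accepting POVM that is separable across the two new provers, and (iii) admits a soundness analysis driven by the product-test theorem. Concretely, I would define a two-prover protocol in which each prover sends a state on $kp$ qubits partitioned into $k$ blocks of $p$ qubits each, with the intended honest state being $\ket{\psi_1} \otimes \cdots \otimes \ket{\psi_k}$ on both sides. The verifier flips a fair coin and, with probability $1/2$, runs the original $V$ on prover $A$'s $k$ blocks (ignoring $B$), and with probability $1/2$, performs parallel block-wise SWAP tests between corresponding blocks of $A$ and $B$, accepting iff all SWAP tests accept. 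To preserve cloneability in the yes case, I would set the new cloner to $C_1 \otimes \cdots \otimes C_k$; this achieves fidelity $1$ on the tensor-product witness because each $C_i$ does.

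Completeness is immediate: honest identical tensor-product witnesses pass every SWAP test with probability $1$ and $V$ with probability $\geq c$, giving total acceptance $\geq (1+c)/2$. For separability of the accepting POVM, I would note that the $V$-branch acts only on $A$, and the product-test branch's POVM is $\bigotimes_{i=1}^{k} (I + \mathrm{SWAP}_{A_i B_i})/2$, a tensor product of symmetric-subspace projectors on the pairs $(A_i, B_i)$. Each such projector is separable across its two subsystems (for instance via $\Pi_{\mathrm{sym}} \propto \int \ketbra{\psi}{\psi}^{\otimes 2}\,d\psi$), and tensoring separable operators preserves separability across the $A$--$B$ cut; taking the convex combination of the two branches preserves separability as well.

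For soundness, I would take any unentangled cheating proofs $\ket{\phi_A}, \ket{\phi_B}$ (pure, without loss of generality by convexity and the $\QMA(2)$ promise) and decompose the acceptance probability as $\tfrac{1}{2} p_V + \tfrac{1}{2} p_{\mathrm{PT}}$, where $p_V$ is the acceptance of $V$ on $\ket{\phi_A}$ and $p_{\mathrm{PT}}$ is the acceptance of the product test on $\ket{\phi_A} \otimes \ket{\phi_B}$. If the total acceptance is $1 - \epsilon$, then $p_V, p_{\mathrm{PT}} \geq 1 - 2\epsilon$. The Harrow--Montanaro theorem then yields a $k$-fold product state $\ket{\sigma_A} = \bigotimes_i \ket{\sigma_{A,i}}$ with $|\langle \phi_A | \sigma_A \rangle|^2 \geq 1 - O(\epsilon)$. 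By soundness of the original $\CQMA(k)$ verification, $V$ accepts $\ket{\sigma_A}$ with probability at most $s$, so by the standard fidelity-to-trace-distance bound $p_V \leq s + O(\sqrt{\epsilon})$. Combining with $p_V \geq 1 - 2\epsilon$ gives $1 - 2\epsilon \leq s + O(\sqrt{\epsilon})$, which rearranges to $\epsilon = \Omega((1-s)^2)$; tracking the constants in the Harrow--Montanaro bound yields the explicit $(1-s)^2/100$ factor claimed in the lemma.

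The main obstacle will be the quantitative invocation of the product-test theorem: one has to translate the closeness-in-fidelity of $\ket{\phi_A}$ to its nearest product state into a small additive loss in $V$'s acceptance probability, and then carefully tune the split between the two branches to recover the specific $1/100$ constant. The other ingredients---completeness, cloneability of the combined tensor-product witness, and separability of the accepting POVM---are essentially bookkeeping once the construction is in place.
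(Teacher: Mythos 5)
Your proposal is correct and follows essentially the same route as the paper: the paper simply invokes the Harrow--Montanaro product-test reduction (Lemma 5 of~\cite{harrow2013testing}), replacing the $k$ proofs by two copies of $\ket{\psi_1}\otimes\cdots\otimes\ket{\psi_k}$ and taking $C_1\otimes\cdots\otimes C_k$ as the new cloner, exactly as you do. Your write-up merely unpacks the contents of that citation (the coin flip between running $V$ and the block-wise swap tests, the separability of the symmetric-subspace projectors, and the quantitative soundness analysis), all of which is consistent with the cited result.
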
 

This follows immediately from the analogous result  
in~\cite{harrow2013testing}\footnote{In particular, Lemma 5 in their updated arXiv.org preprint, version 6.} noting that cloneability is not required in the proof but carries throughout. 
More precisely, in the new verification, we replace the $k$ proofs $\ket{\psi_i}$ with two proofs, each being $\ket{\psi_1} \otimes \cdots \otimes \ket{\psi_k}$. 
Since each one is cloneable, then so is $\ket{\psi_1} \otimes \cdots \otimes \ket{\psi_k}$ (the new cloning circuit is $C_1 \otimes \cdots \otimes C_k$). 

\begin{lemma}\label{lem:cloneable-amplification2} 
We have  
\begin{equation}
\CQMAtwosepp{p}(c,s) \subseteq \CQMAtwop{p} (c^\ell, s^\ell) 
\end{equation} 
for any polynomial-bounded function $\ell$. 
\end{lemma}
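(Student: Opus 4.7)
The plan is to prove this by parallel repetition: the new verifier will ask for $\ell$ copies of each of the two proofs and run $\ell$ independent instances of the original SEP verification. The new honest proof for an input $x \in L_{\yes}$ will be the pair $(\ket{\psi_1}^{\otimes \ell}, \ket{\psi_2}^{\otimes \ell})$, where $(\ket{\psi_1}, \ket{\psi_2})$ is the honest proof pair for the $\CQMAtwosepp{p}(c,s)$ protocol. The new cloning circuits will be $C'_i := C_i^{\otimes \ell}$, applied to the $\ell$ individual registers holding copies of $\ket{\psi_i}$; since $f=1$, each $C_i$ perfectly duplicates $\ket{\psi_i}$, so $C'_i$ perfectly duplicates $\ket{\psi_i}^{\otimes \ell}$ (after a suitable rearrangement of the $2\ell$ resulting registers into two blocks of $\ell$). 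Hence cloneability is preserved.

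Completeness is immediate: on the honest product proof the $\ell$ instances of the original verifier behave independently and each accepts with probability at least $c$, so the joint accepting probability is at least $c^\ell$.

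The main content, and the main obstacle, is soundness. Here we crucially use that the accepting POVM of the original verifier is separable across the two registers. In the repeated protocol, a dishonest pair of provers may send arbitrary (possibly internally entangled) $\ell$-register proofs $\rho_1, \rho_2$ between the two provers, but the overall accepting POVM element is the tensor product $E^{\otimes \ell}$ of a separable POVM $E = \sum_i p_i A_i \otimes B_i$, which is itself separable across the two provers. The standard parallel repetition argument for $\QMA(2)$ with separable measurements (as used in Harrow--Montanaro) then gives the bound $s^\ell$: one can write the acceptance probability as an expectation over a product form whose marginal in each instance is bounded by $s$, and multiplicativity yields the desired exponential decay. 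I would invoke this lemma as a black box from the parallel-repetition literature rather than reproving it.

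Combining these three ingredients gives a $\CQMAtwop{p}$ protocol with completeness $c^\ell$ and soundness $s^\ell$, as required. The only delicate point beyond citation is to check that the cloner construction really yields perfect cloning of the tensor-product proof; but since the cloning fidelity is $1$ and the $\ell$ copies sit in disjoint registers, this reduces to applying $C_i$ register-by-register and noting that the joint state $\ket{\psi_i}^{\otimes \ell}$ factorizes, so no cross-register interaction is needed.
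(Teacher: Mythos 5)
Your construction proves a different containment from the one stated, and the difference is fatal for the downstream application. You have the prover send $\ell$ copies of each proof, so your new protocol has witness registers of $\ell p$ qubits each; that is, you establish $\CQMAtwosepp{p}(c,s) \subseteq \CQMAtwop{\ell p}(c^\ell,s^\ell)$, whereas the lemma keeps the subscript $p$ on both sides. The paper's point --- stated explicitly right after the lemma ("the size of the proof need not grow thanks to the ability to clone") --- is that the \emph{verifier} manufactures the $\ell+1$ copies itself by running the cloning circuits $C_A, C_B$ on the single $p$-qubit pair it receives, and then runs $V$ on each of the resulting $\ell+1$ pairs, accepting iff all accept. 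This is not a cosmetic difference: in the proof of \Cref{thm:cloneable}, this lemma is followed by \Cref{lem:clonable-single-proof}, which converts two unentangled proofs into one at the cost of multiplying the soundness by $2^{2p'}$ where $p'$ is the proof length at that stage. With the paper's version the proof length stays at $kp$ while the soundness is driven down to $(1-1/3600)^{p_2}$ with $p_2 = \Omega(kp)$ chosen freely, so the product is negligible. With your version the proof length would be $kp\,\ell$, the soundness penalty would be $2^{2kp\ell}$ against a soundness of roughly $s^{\ell}$ with $s$ a constant, and the product diverges; the whole chain of reductions collapses.

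Your completeness and cloneability observations transfer essentially unchanged to the paper's construction (the honest witness is still a single cloneable pair, and the cloners reproduce it perfectly), and your appeal to separability for soundness is in the right spirit. But note how separability is actually used in the paper's sequential argument: after the cloners are applied to an arbitrary (possibly non-product-across-copies) dishonest state, the $\ell+1$ verifications are analyzed one at a time, and the fact that the accepting POVM is separable across the two prover registers guarantees that conditioning on acceptance does not create entanglement between the two sides, so each subsequent invocation of $V$ still faces an unentangled pair and accepts with probability at most $s$. You would need this (or the black-box Harrow--Montanaro multiplicativity you cite, adapted to states produced by the cloners on no-instances, for which no behavioural promise holds) rather than the product-provers parallel-repetition setup you describe.
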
 

Again, this follows from the analogous result  
in~\cite{harrow2013testing}\footnote{In particular, Lemma 7 in their updated arXiv.org preprint, version 6.}, but in this case the size of the proof need not grow thanks to the ability to clone. 
We also only need it for the case $k=2$. 
  
\begin{proof} 
Let $V$ be a $\CQMAtwosepp{p}(c,s)$ verification and let $C_A$ and $C_B$ be the cloning circuits. 
We now define a verification for $\CQMAtwop{p}$ which does the following. 
First, it applies the cloning circuits $\ell$ times, on each respective proof, then it verifies all $\ell+1$ pairs of proofs each using the verification $V$. 
Overall, this new verification accepts if and only if all $\ell+1$ $V$ verifications accept. 

For completeness, it is straightforward to see that if you send a cloneable proof $\ket{\psi_A}\ket{\psi_B}$, then you will have $\ell+1$ copies afterwards, each of which will be accepted with probability at least $c^{\ell+1}$. 
Note that since this is the same proof, it is still cloneable. 

For soundness, this is a bit trickier since one has to be concerned with the verifications producing entanglement (for which $V$ may not be sound). 
We can analyze these in sequence. 
If a verification $V$ accepts, then since the accepting POVM is separable, the post-measured state (upon $V$ accepting) will not produce any entanglement. 
Therefore, the next set of (unentangled) proofs which are verified by $V$ will be accepted with probability at most $s$. 
Repeating this, the new soundness will be at most $s^{\ell+1}$, as desired. 
\end{proof} 

We remark that this proof is similar to the perfect parallel repetition result in~\cite{harrow2013testing} except that we can use the cloners to produce the $\ell+1$ copies of the proofs instead of asking the prover to supply them. 
We also remark that having a separable accepting POVM is crucial in this proof; we are unaware how to prove this claim without this assumption. 
We do not require this assumption in the rest of the proofs, thus we are not concerned with maintaining it. 

We now show that if a $\BQP$ verifier is sound against product states, then we can bound the soundness against all (even entangled) states as well. 
The following two lemmas follow closely from a result of Brand\~{a}o, found in a footnote in~\cite{aaronson2008power}. 

\begin{lemma} \label{useful}
Suppose $M$ is a positive semidefinite matrix. 
If $\bra{e} \bra{f} M \ket{e} \ket{f} \leq \alpha$ for all unit vectors $\ket{e}, \ket{f} \in \mathbb{C}^n$, then $\lambda_{\max}(M) \leq \alpha n^2$. 
\end{lemma}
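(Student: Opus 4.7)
The plan is to reduce the bound on arbitrary unit vectors in $\complex^n \otimes \complex^n$ to the hypothesized bound on product states by expanding in a product basis and controlling the off-diagonal terms via the positive semidefiniteness of $M$. Specifically, I would take an arbitrary unit vector $\ket{\psi} = \sum_{i,j=1}^n \alpha_{ij}\ket{i}\ket{j}$ with $\sum_{i,j}|\alpha_{ij}|^2 = 1$, and write
\begin{equation}
\bra{\psi} M \ket{\psi} = \sum_{i,j,k,l} \alpha_{ij}^* \, \alpha_{kl} \, \bra{i}\bra{j} M \ket{k}\ket{l}.
\end{equation}

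The key observation is that since $M \succeq 0$, the map $(\ket{u},\ket{v}) \mapsto \bra{u} M \ket{v}$ is a positive semidefinite sesquilinear form, so the Cauchy--Schwarz inequality yields
\begin{equation}
\bigabs{\bra{i}\bra{j} M \ket{k}\ket{l}} \;\leq\; \sqrt{\bra{i}\bra{j}M\ket{i}\ket{j}} \; \sqrt{\bra{k}\bra{l}M\ket{k}\ket{l}} \;\leq\; \alpha,
\end{equation}
where the last step applies the hypothesis to the unit product vectors $\ket{i}\ket{j}$ and $\ket{k}\ket{l}$. Plugging this in gives $|\bra{\psi} M \ket{\psi}| \leq \alpha \bigl(\sum_{i,j} |\alpha_{ij}|\bigr)^2$, and a second application of Cauchy--Schwarz to the $n^2$ coefficients $\alpha_{ij}$ bounds $\bigl(\sum_{i,j}|\alpha_{ij}|\bigr)^2 \leq n^2 \sum_{i,j}|\alpha_{ij}|^2 = n^2$. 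Maximizing over $\ket{\psi}$ yields $\lambda_{\max}(M) \leq \alpha n^2$.

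There is no real obstacle here; the only subtlety is to explicitly invoke Cauchy--Schwarz for the PSD form induced by $M$ (rather than, say, a Schmidt-decomposition argument, which would give the tighter but more restrictive bound $\alpha n$). The looser bound $\alpha n^2$ is exactly what \Cref{useful} asserts and is what is needed downstream.
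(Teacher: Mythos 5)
Your proof is correct, but it takes a genuinely different route from the paper's. You expand an arbitrary unit vector in a fixed product basis and bound every matrix element $\bra{i}\bra{j} M \ket{k}\ket{l}$ by $\alpha$ via Cauchy--Schwarz for the PSD sesquilinear form induced by $M$, then apply Cauchy--Schwarz once more to the $n^2$ coefficients. The paper instead observes that the hypothesis says $\| \sqrt{M}\ket{e}\ket{f}\|_2 \leq \sqrt{\alpha}$ for all product unit vectors, writes the top eigenvector in its Schmidt decomposition, and uses the triangle inequality on $\sqrt{M}$ applied to that decomposition before a final Cauchy--Schwarz on the Schmidt coefficients. The two arguments are comparable in length and both are elementary; yours avoids the Schmidt decomposition entirely. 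One remark on your parenthetical: you are right that the Schmidt route is the one that can give the tighter bound $\alpha n$, since a vector in $\mathbb{C}^n \otimes \mathbb{C}^n$ has Schmidt rank at most $n$ (the paper itself sums over $n^2$ Schmidt terms, which is an overcount, and so it only extracts $\alpha n^2$ from that argument --- but $\alpha n^2$ is all that is claimed and all that is needed downstream, where it becomes the factor $2^{2p}$ in the reduction to a single prover). Your product-basis expansion is intrinsically limited to $\alpha n^2$, which is fine here.
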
 

\begin{proof} 
Since $\bra{e} \bra{f} M \ket{e} \ket{f} = \| \sqrt{M} \ket{e} \ket{f} \|_2^2$, we have 
\begin{equation} \label{hyp} 
\| \sqrt{M} \ket{e} \ket{f} \|_2 \; \leq \; \sqrt{\alpha}, \text{ for all unit vectors } \ket{e}, \, \ket{f}. 
\end{equation}
Let $\ket{\psi}$ be the normalized eigenvector of $M$ corresponding to its greatest eigenvalue. 
It helps to write $\ket{\psi} = \sum_{i=1}^{n^2} \sqrt{\lambda_i} \ket{e_i} \ket{f_i}$ in its Schmidt decomposition (so $\lambda_i \geq 0$ and $\sum_{i=1}^{n^2} \lambda_i = 1$).  
We can write 
\begin{equation} 
\lambda_{\max}(M) 
= \bra{\psi} M \ket{\psi} 
= \| \sqrt{M} \ket{\psi} \|_2^2 
= \left\| \sqrt{M} \left( \sum_{i=1}^{n^2} \sqrt{\lambda_i} \ket{e_i} \ket{f_i} \right)  \right\|_2^2 
\leq \left( \sum_{i=1}^{n^2} \sqrt{\lambda_i} \left\| \sqrt{M} \ket{e_i} \ket{f_i} \right\|_2 \right)^2  
\end{equation} 
using the triangle inequality. 
We can now apply Eq.~\ref{hyp} to get 
\begin{equation} 
\lambda_{\max}(M) 
\leq \left( \sum_{i=1}^{n^2} \sqrt{\lambda_i} \sqrt{\alpha} \right)^2 
= \alpha \left( \sum_{i=1}^{n^2} \sqrt{\lambda_i} \right)^2 
\leq \alpha n^2
\end{equation} 
where the last inequality holds by Cauchy-Schwarz. 
\end{proof} 

We use this lemma to prove the following. 

\begin{lemma} \label{lem:clonable-single-proof} 
We have  
\begin{equation}
\CQMAtwop{p}(c,s) \subseteq \CQMAp{2p} \left( c, 2^{2p} \cdot s \right). 
\end{equation} 
\end{lemma}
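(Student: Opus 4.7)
The plan is to have the single $\CQMA$ prover send the concatenation $\ket{\psi_A} \otimes \ket{\psi_B}$ of the two original proofs as a single proof on $2p(|x|)$ qubits, equipped with the tensor-product cloning circuit $C_A \otimes C_B$. The new verifier simply runs the original $\CQMAtwo$ verification $V$ on the combined register. Completeness is immediate: starting from an accepting cloneable pair $(\ket{\psi_A}, \ket{\psi_B})$ in the $\CQMAtwo$ protocol, the product state $\ket{\psi_A} \otimes \ket{\psi_B}$ is cloneable (applying $C_A \otimes C_B$ produces two copies with perfect fidelity) and is accepted by $V$ with probability at least $c$.

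The main obstacle is soundness. In the original protocol, the guarantee $\Pr[\text{accept}] \leq s$ holds only for \emph{product} proofs $\ket{e} \otimes \ket{f}$, whereas a malicious single prover could now submit an \emph{entangled} state on the combined $2p$-qubit register. To control this, let $M$ be the POVM element corresponding to acceptance of $V$; $M$ is positive semidefinite and acts on $\complex^{2^p} \otimes \complex^{2^p}$. The $\CQMAtwo$ soundness hypothesis precisely says that $\bra{e}\bra{f} M \ket{e}\ket{f} \leq s$ for all unit vectors $\ket{e}, \ket{f} \in \complex^{2^p}$.

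Applying Lemma \ref{useful} with $n = 2^p$ and $\alpha = s$ then yields
\begin{equation}
\lambda_{\max}(M) \leq s \cdot n^2 = 2^{2p} \cdot s.
\end{equation}
Since $\lambda_{\max}(M)$ upper bounds the acceptance probability $\bra{\phi} M \ket{\phi}$ on \emph{every} (possibly entangled) unit vector $\ket{\phi}$ on the combined register, the new single-proof protocol has soundness at most $2^{2p} \cdot s$, matching the claimed bound. No other changes are needed, and the proof system remains in $\CQMA$ because the product cloning circuit $C_A \otimes C_B$ witnesses cloneability of the honest proof on $2p(|x|)$ qubits.
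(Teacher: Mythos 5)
Your proposal is correct and follows essentially the same route as the paper: treat the two $p$-qubit proofs as a single $2p$-qubit proof (with the tensor-product cloner witnessing cloneability), keep the same verification, and invoke Lemma~\ref{useful} with $n = 2^p$ and $\alpha = s$ to bound the acceptance probability of arbitrary entangled proofs by $2^{2p}\cdot s$. No gaps; you simply spell out the instantiation of the lemma a bit more explicitly than the paper does.
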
 

Before we prove this lemma, we remark that we only use this lemma in the case when $c > 2^{2p} \cdot s$. 

\begin{proof} 
Suppose we take a $\CQMAtwop{p}(c,s)$ verification. 
Define the new verification where we remove the ``unentangled'' constraint. 
Then in a yes-instance, there exists a cloneable proof which is still accepted with probability $c$ in the new verification (and still cloneable). 
In a no-instance, by Lemma~\ref{useful}, by adding entanglement, soundness can increase to at most $s \cdot (2^p)^2 = s \cdot 2^{2p}$. 
Note that this verification takes a $2p$-qubit proof now since the original verification took \emph{two} proofs, each of size $p$ qubits. 
\end{proof} 

We now put these lemmas together to prove $\CQMAk = \CQMA$. 
  
\begin{proof}[Proof of \Cref{thm:cloneable}]
    Starting from $\CQMAk(2/3,1,1/3)$ and for any polynomials $p_1$ and~$p_2$, we have from \Cref{lem:cloneable-amplification1,lem:cloneable-two-provers,lem:cloneable-amplification2} that
  \begin{align*}
       \CQMAk_p(2/3,1,1/3) \subseteq \CQMA_{kp}(2)\left((1-2^{-p_1(|x|)-1})^{p_2(|x|)},1, (1-1/3600)^{p_2(|x|)}\right).
  \end{align*} 
  By picking $p_1$ to be any polynomial and $p_2 = \Omega(kp)$, we have from \Cref{lem:clonable-single-proof} that 
    \begin{align*}
       \CQMAk_p(2/3,1,1/3) \subseteq \CQMA_{2kp}\left(1-\negl(|x|),1, \negl(|x|)\right). 
  \end{align*} 
  The opposite containment (ignoring proof sizes) has already been mentioned (\Cref{eq24}). 
\end{proof}
  
\subsection{Remarks and future work regarding \Cref{thm:cloneable}}
\label{sec:remarks-cloneable}
We first mention that one can define a new class \class{PirateableQMA} where a cloner (now referred to as a pirate) takes as input a quantum proof $\ket{\psi}$ and outputs {$\ket{\psi'} \ket{\psi''}$ where $\ket{\psi'}$ and $\ket{\psi''}$ are proofs for \emph{possibly different} verifications (of the same problem instance).}
We suspect some of our results can be generalized to this setting as well and we leave its analysis as an interesting open problem. 
Secondly, it would be interesting to see how \Cref{thm:cloneable} behaves in the less-than-perfect cloneability setting. 
We believe that setting the cloneability parameter to $f = 1 - \negl(|x|)$ should not change the results. 
This would be beneficial since perfect cloneability (when $f = 1$) does depend on the gate set, whereas $f = 1 - \negl(|x|)$ does not. 
One issue that arises without having perfect clones is that the output of the cloning circuit need not be product, and thus the use of Chernoff's bound in the proof of \Cref{lem:cloneable-amplification1} poses an issue. 
Lastly, we do note that there are implicit lower bounds on the cloning parameter $f$ however, assuming unproven conjectures. 
For example, we have $\CQMAk(c,0,s) = \QMA(k)(c,s)$, so if $\CQMAk(c,0,s) = \CQMA(c',0,s')$, for reasonable parameters $c'$ and $s'$,  then this would imply $\QMA(k) = \QMA$. 
This statement can be strengthened to $f = 1/\exp$ since the cloning circuit can just output a maximally mixed state, and thus always exists. 
Thus, we suspect there is a dichotomy-type theorem classifying the consequences of having $f$ belonging to certain intervals.  

Finally, we think it would be interesting to understand the interplay of the parameters $(c,f,s)$ in $\CQMA$-type protocols. In particular, how does completeness-soundness gap amplification protocols affect the cloning fidelity $f$? And can one amplify $f$ without sacrificing much of the completeness and soundness guarantees? 
These and other related questions which delve deeper into the power and flexibility of cloneable proof systems are very interesting and we leave them to future work.

\bibliographystyle{Files/bibtex/quasar.bst}
\bibliography{
Files/bibtex/quasar-full.bib,
Files/bibtex/quasar-more.bib, 
Files/bibtex/quasar.bib} 

\newcommand{\etalchar}[1]{$^{#1}$}
\makeatletter\@ifundefined{url}{\newcommand{\url}[1]{\texttt{#1}}}{}\@ifundefined{href}{\newcommand{\href}[2]{\texttt{#2}}}{}\@ifundefined{mathbb}{\newcommand{\mathbb}[1]{#1}}{}\makeatother
\begin{thebibliography}{MVW13b}

\bibitem[Aar09]{Aar09}
S.~Aaronson.
\newblock Quantum copy-protection and quantum money.
\newblock In {\em 24th Annual Conference on Computational Complexity (CCC
  2009)}, pp. 229--242, 2009.
\\\href{https://dx.doi.org/10.1109/CCC.2009.42}{{\tt doi:10.1109/CCC.2009.42}}

\bibitem[Aar16]{Aar16}
S.~Aaronson.
\newblock The no-cloning theorem and the human condition.
\newblock After-dinner talk given at the QCRYPT 2016 conference, 2016.
\\{\tt\url{https://www.scottaaronson.com/blog/?p=2903}}

\bibitem[ABBS22]{aharonov2022pursuit}
D.~Aharonov, M.~{Ben-Or}, F.~G. Brand{\~{a}}o, and O.~Sattath.
\newblock The pursuit of uniqueness: Extending {V}aliant-{V}azirani theorem to
  the probabilistic and quantum settings.
\newblock {\em Quantum}, vol.~6, p. 668, 2022.
\\\href{https://dx.doi.org/10.22331/q-2022-03-17-668}{{\tt
  doi:10.22331/q-2022-03-17-668}}

\bibitem[ABD{\etalchar{+}}08]{aaronson2008power}
S.~Aaronson, S.~Beigi, A.~Drucker, B.~Fefferman, and P.~Shor.
\newblock The power of unentanglement.
\newblock In {\em 23rd Annual IEEE Conference on Computational Complexity (CCC
  2008)}, pp. 223--236, 2008.
\\\href{https://dx.doi.org/10.1109/CCC.2008.5}{{\tt doi:10.1109/CCC.2008.5}}

\bibitem[AC12]{AC12}
S.~Aaronson and P.~Christiano.
\newblock Quantum money from hidden subspaces.
\newblock In {\em STOC '12: Proceedings of the fourty-fourth ACM symposium on
  Theory of computing}, pp. 41--60, 2012.
\\\href{https://dx.doi.org/10.1145/2213977.2213983}{{\tt
  doi:10.1145/2213977.2213983}}

\bibitem[AFG{\etalchar{+}}12]{AFG+12}
S.~Aaronson, E.~Farhi, D.~Gosset, A.~Hassidim, J.~Kelner, and A.~Lutomirski.
\newblock Quantum money.
\newblock {\em Communications of the ACM}, vol.~55, no.~8, pp. 84--92, 2012.
\\\href{https://dx.doi.org/10.1145/2240236.2240258}{{\tt
  doi:10.1145/2240236.2240258}}

\bibitem[AK07]{AK07}
S.~Aaronson and G.~Kuperberg.
\newblock Quantum versus classical proofs and advice.
\newblock In {\em 22nd Annual Conference on Computational Complexity (CCC
  2007)}, pp. 115--128, 2007.
\\\href{https://dx.doi.org/10.1109/CCC.2007.27}{{\tt doi:10.1109/CCC.2007.27}}

\bibitem[ALM{\etalchar{+}}98]{Arora98}
S.~Arora, C.~Lund, R.~Motwani, M.~Sudan, and M.~Szegedy.
\newblock Proof verification and the hardness of approximation problems.
\newblock {\em Journal of the ACM}, vol.~45, no.~3, pp. 501--555, 1998.
\\\href{https://dx.doi.org/10.1145/278298.278306}{{\tt
  doi:10.1145/278298.278306}}

\bibitem[ALP21]{ALP21}
P.~Ananth and R.~L. La~Placa.
\newblock Secure software leasing.
\newblock In {\em Advances in Cryptology --- EUROCRYPT 2021}, vol.~2, pp.
  501--530, 2021.
\\\href{https://dx.doi.org/10.1007/978-3-030-77886-6\_17}{{\tt
  doi:10.1007/978-3-030-77886-6\_17}}

\bibitem[AN02]{AN02arxiv}
D.~Aharonov and T.~Naveh.
\newblock Quantum {NP} -- {A} survey.
\newblock 2002.
\\\href{https://arxiv.org/abs/quant-ph/0210077}{{\tt arXiv:quant-ph/0210077}}

\bibitem[BB84]{BB84}
C.~H. Bennett and G.~Brassard.
\newblock Quantum cryptography: Public key distribution and coin tossing.
\newblock In {\em International Conference on Computers, Systems and Signal
  Processing}, pp. 175--179, 1984.

\bibitem[BC23]{BC23arxiv}
A.~Broadbent and E.~Culf.
\newblock Uncloneable cryptographic primitives with interaction.
\newblock 2023.
\\\href{https://arxiv.org/abs/2303.00048}{{\tt arXiv:2303.00048 [quant-ph]}}

\bibitem[BI20]{BI20}
A.~Broadbent and R.~Islam.
\newblock Quantum encryption with certified deletion.
\newblock In {\em Theory of Cryptography (TCC 2020)}, vol.~3, pp. 92--122,
  2020.
\\\href{https://dx.doi.org/10.1007/978-3-030-64381-2\_4}{{\tt
  doi:10.1007/978-3-030-64381-2\_4}}

\bibitem[BJL{\etalchar{+}}21]{BJL+21}
A.~Broadbent, S.~Jeffery, S.~Lord, S.~Podder, and A.~Sundaram.
\newblock Secure software leasing without assumptions.
\newblock In {\em Theory of Cryptography (TCC 2021)}, vol.~1, pp. 90--120,
  2021.
\\\href{https://dx.doi.org/10.1007/978-3-030-90459-3\_4}{{\tt
  doi:10.1007/978-3-030-90459-3\_4}}

\bibitem[BK24]{ben2024oracle}
S.~{Ben-David} and S.~Kundu.
\newblock Oracle separation of {{\sf QMA}} and {{\sf QCMA}} with bounded
  adaptivity.
\newblock In {\em 51st International Colloquium on Automata, Languages, and
  Programming (ICALP 2024)}, art.~21, 2024.
\\\href{https://dx.doi.org/10.4230/LIPIcs.ICALP.2024.21}{{\tt
  doi:10.4230/LIPIcs.ICALP.2024.21}}

\bibitem[BKL23]{BKL23arxiv}
A.~Broadbent, M.~Karvonen, and S.~Lord.
\newblock Uncloneable quantum advice.
\newblock 2023.
\\\href{https://arxiv.org/abs/2309.05155}{{\tt arXiv:2309.05155 [quant-ph]}}

\bibitem[BL20]{BL20}
A.~Broadbent and S.~Lord.
\newblock Uncloneable quantum encryption via oracles.
\newblock In {\em 15th Conference on the Theory of Quantum Computation,
  Communication, and Cryptography (TQC 2020)}, art.~4, 2020.
\\\href{https://dx.doi.org/10.4230/LIPIcs.TQC.2020.4}{{\tt
  doi:10.4230/LIPIcs.TQC.2020.4}}

\bibitem[BS16]{BS16}
A.~Broadbent and C.~Schaffner.
\newblock Quantum cryptography beyond quantum key distribution.
\newblock {\em Designs, Codes and Cryptography}, vol.~78, no.~1, pp. 351--382,
  2016.
\\\href{https://dx.doi.org/10.1007/s10623-015-0157-4}{{\tt
  doi:10.1007/s10623-015-0157-4}}

\bibitem[BS23]{ben2023quantum}
S.~{Ben-David} and O.~Sattah.
\newblock Quantum tokens for digital signatures.
\newblock {\em Quantum}, vol.~7, p. 901, 2023.
\\\href{https://dx.doi.org/10.22331/q-2023-01-19-901}{{\tt
  doi:10.22331/q-2023-01-19-901}}

\bibitem[CHTW04]{CHTW04}
R.~Cleve, P.~H{\o}yer, B.~Toner, and J.~Watrous.
\newblock Consequences and limits of nonlocal strategies.
\newblock In {\em 19th Annual Conference on Computational Complexity (CCC
  2004)}, pp. 236--249, 2004.
\\\href{https://dx.doi.org/10.1109/CCC.2004.1313847}{{\tt
  doi:10.1109/CCC.2004.1313847}}

\bibitem[CMP24]{CMP24}
A.~Coladangelo, C.~Majenz, and A.~Poremba.
\newblock Quantum copy-protection of compute-and-compare programs in the
  quantum random oracle model.
\newblock {\em Quantum}, vol.~8, p. 1330, 2024.
\\\href{https://dx.doi.org/10.22331/q-2024-05-02-1330}{{\tt
  doi:10.22331/q-2024-05-02-1330}}

\bibitem[CV22]{CV22}
E.~Culf and T.~Vidick.
\newblock A monogamy-of-entanglement game for subspace coset states.
\newblock {\em Quantum}, vol.~6, p. 791, 2022.
\\\href{https://dx.doi.org/10.22331/q-2022-09-01-791}{{\tt
  doi:10.22331/q-2022-09-01-791}}

\bibitem[Die82]{Die82}
D.~Dieks.
\newblock Communication by {EPR} devices.
\newblock {\em Physics Letters A}, vol.~92, no.~6, pp. 271--272, 1982.
\\\href{https://dx.doi.org/10.1016/0375-9601(82)90084-6}{{\tt
  doi:10.1016/0375-9601(82)90084-6}}

\bibitem[Eke91]{Eke91}
A.~K. Ekert.
\newblock Quantum cryptography based on {B}ell's theorem.
\newblock {\em Physical Review Letters}, vol.~67, no.~6, pp. 661--663, 1991.
\\\href{https://dx.doi.org/10.1103/PhysRevLett.67.661}{{\tt
  doi:10.1103/PhysRevLett.67.661}}

\bibitem[FK15]{FK15arxiv}
B.~Fefferman and S.~Kimmel.
\newblock Quantum vs classical proofs and subset verification.
\newblock 2015.
\\\href{https://arxiv.org/abs/1510.06750}{{\tt arXiv:1510.06750 [quant-ph]}}

\bibitem[FS90]{feige1990witness}
U.~Feige and A.~Shamir.
\newblock Witness indistinguishable and witness hiding protocols.
\newblock In {\em STOC '90: Proceedings of the twenty-second ACM symposium on
  Theory of Computing}, pp. 416--426, 1990.
\\\href{https://dx.doi.org/10.1145/100216.100272}{{\tt
  doi:10.1145/100216.100272}}

\bibitem[GKS16]{GKS16}
A.~B. Grilo, I.~Kerenidis, and J.~Sikora.
\newblock {\sf QMA} with subset state witnesses.
\newblock {\em Chicago Journal of Theoretical Computer Science}, vol.~2016,
  no.~4, 2016.
\\\href{https://dx.doi.org/10.4086/cjtcs.2016.004}{{\tt
  doi:10.4086/cjtcs.2016.004}}

\bibitem[GMR89]{GMR89}
S.~Goldwasser, S.~Micali, and C.~Rackoff.
\newblock The knowledge complexity of interactive proof systems.
\newblock {\em SIAM Journal on Computing}, vol.~18, no.~1, pp. 186--208, 1989.
\\\href{https://dx.doi.org/10.1137/0218012}{{\tt doi:10.1137/0218012}}

\bibitem[GMR23]{GMR23arxiv}
V.~Goyal, G.~Malavolta, and J.~Raizes.
\newblock Unclonable commitments and proofs.
\newblock Cryptology ePrint Archive, Report 2023/1538, 2023.
\\{\tt\url{http://eprint.iacr.org/2023/1538}}

\bibitem[Gol04]{Gol04b}
O.~Goldreich.
\newblock {\em Foundations of Cryptography: Volume 2, Basic Applications}.
\newblock Cambridge University Press, 2004.

\bibitem[HM13]{harrow2013testing}
A.~W. Harrow and A.~Montanaro.
\newblock Testing product states, quantum {M}erlin-{A}rthur games and tensor
  optimization.
\newblock {\em Journal of the ACM}, vol.~60, no.~1, pp. 1--43, 2013.
\\\href{https://dx.doi.org/10.1145/2432622.2432625}{{\tt
  doi:10.1145/2432622.2432625}}

\bibitem[JK23]{JK23arxiv}
R.~Jawale and D.~Khurana.
\newblock Unclonable non-interactive zero-knowledge.
\newblock 2023.
\\\href{https://arxiv.org/abs/2310.07118}{{\tt arXiv:2310.07118 [cs.CR]}}

\bibitem[JNV{\etalchar{+}}20]{JNV+20arxiv}
Z.~Ji, A.~Natarajan, T.~Vidick, J.~Wright, and H.~Yuen.
\newblock {MIP}*={RE}.
\newblock 2020.
\\\href{https://arxiv.org/abs/2001.04383}{{\tt arXiv:2001.04383 [quant-ph]}}

\bibitem[Kla17]{Kla17}
H.~Klauck.
\newblock The complexity of quantum disjoiness.
\newblock In {\em 42nd International Symposium on Mathematical Foundations of
  Computer Science---MFCS 2017}, art.~15, 2017.
\\\href{https://dx.doi.org/10.4230/LIPIcs.MFCS.2017.15}{{\tt
  doi:10.4230/LIPIcs.MFCS.2017.15}}

\bibitem[KM03]{KM03}
H.~Kobayashi and K.~Matsumoto.
\newblock Quantum multi-prover interactive proof systems with limited prior
  entanglement.
\newblock {\em Journal of Computer and System Sciences}, vol.~66, no.~3, pp.
  429--450, 2003.
\\\href{https://dx.doi.org/10.1016/S0022-0000(03)00035-7}{{\tt
  doi:10.1016/S0022-0000(03)00035-7}}

\bibitem[KMY03]{kobayashi2003quantum}
H.~Kobayashi, K.~Matsumoto, and T.~Yamakami.
\newblock Quantum {M}erlin-{A}rthur proof systems: Are multiple {M}erlins more
  helpful to {A}rthur?
\newblock In {\em 14th International Symposium on Algorithms and Computation
  (ISAAC 2003)}, pp. 189--198, 2003.
\\\href{https://dx.doi.org/10.1007/978-3-540-24587-2\_21}{{\tt
  doi:10.1007/978-3-540-24587-2\_21}}

\bibitem[KP14]{KP14b}
H.~Klauck and S.~Podder.
\newblock Two results about quantum messages.
\newblock In {\em 39th International Symposium on Mathematical Foundations of
  Computer Science---MFCS 2014}, pp. 445--456, 2014.
\\\href{https://dx.doi.org/10.1007/978-3-662-44465-8\_38}{{\tt
  doi:10.1007/978-3-662-44465-8\_38}}

\bibitem[KSV02]{KSV02}
A.~Y. Kitaev, A.~Shen, and M.~N. Vyalyi.
\newblock {\em Classical and Quantum Computation}.
\newblock American Mathematical Society, 2002.
\\\href{https://dx.doi.org/10.1090/gsm/047}{{\tt doi:10.1090/gsm/047}}

\bibitem[Lin16]{Lindel1l6}
Y.~Lindell.
\newblock How to simulate it - a tutorial on the simulation proof technique.
\newblock Cryptology {ePrint} Archive, Paper 2016/046, 2016.
\\{\tt\url{https://eprint.iacr.org/2016/046}}

\bibitem[LLPY24]{li2023classical}
X.~Li, Q.~Liu, A.~Pelecanos, and T.~Yamakawa.
\newblock Classical vs quantum advice and proofs under classically-accessible
  oracle.
\newblock In {\em 15th Innovations in Theoretical Computer Science Conference
  (ITCS 2024)}, pp. 72:1--72:19, 2024.
\\\href{https://dx.doi.org/10.4230/LIPIcs.ITCS.2024.72}{{\tt
  doi:10.4230/LIPIcs.ITCS.2024.72}}

\bibitem[MVW13a]{MolinaVW12}
A.~Molina, T.~Vidick, and J.~Watrous.
\newblock Optimal counterfeiting attacks and generalizations for {W}iesner's
  quantum money.
\newblock In {\em Theory of Quantum Computation, Communication, and
  Cryptography (TQC 2012)}, pp. 45--64, 2013.
\\\href{https://dx.doi.org/10.1007/978-3-642-35656-8\_4}{{\tt
  doi:10.1007/978-3-642-35656-8\_4}}

\bibitem[MVW13b]{MVW13}
A.~Molina, T.~Vidick, and J.~Watrous.
\newblock Optimal counterfeiting attacks and generalizations for {W}iesner's
  quantum money.
\newblock In {\em Theory of Quantum Computation, Communication, and
  Cryptography (TQC 2012)}, pp. 45--64, 2013.
\\\href{https://dx.doi.org/10.1007/978-3-642-35656-8\_4}{{\tt
  doi:10.1007/978-3-642-35656-8\_4}}

\bibitem[NN24]{natarajan2024distribution}
A.~Natarajan and C.~Nirkhe.
\newblock A distribution testing oracle separation between {{\sf QMA}} and
  {{\sf QCMA}}.
\newblock {\em Quantum}, vol.~8, p. 1377, 2024.
\\\href{https://dx.doi.org/10.22331/q-2024-06-17-1377}{{\tt
  doi:10.22331/q-2024-06-17-1377}}

\bibitem[NZ24]{NZ24}
B.~Nehoran and M.~Zhandry.
\newblock A computational separation between quantum no-cloning and
  no-telegraphing.
\newblock In {\em 15th Innovations in Theoretical Computer Science Conference
  (ITCS 2024)}, pp. 82:1--82:23, 2024.
\\\href{https://dx.doi.org/10.4230/LIPIcs.ITCS.2024.82}{{\tt
  doi:10.4230/LIPIcs.ITCS.2024.82}}

\bibitem[Wat00]{Wat00}
J.~Watrous.
\newblock Succinct quantum proofs for properties of finite groups.
\newblock In {\em 41st Annual Symposium on Foundations of Computer Science
  (FOCS 2000)}, pp. 537--546, 2000.
\\\href{https://dx.doi.org/10.1109/SFCS.2000.892141}{{\tt
  doi:10.1109/SFCS.2000.892141}}

\bibitem[Wat02]{watrous2002limits}
J.~Watrous.
\newblock Limits on the power of quantum statistical zero-knowledge.
\newblock In {\em 43rd Annual IEEE Symposium on Foundations of Computer Science
  (FOCS 2002)}, pp. 459--468, 2002.
\\\href{https://dx.doi.org/10.1109/SFCS.2002.1181970}{{\tt
  doi:10.1109/SFCS.2002.1181970}}

\bibitem[Wat03]{Wat03}
J.~Watrous.
\newblock {\sf PSPACE} has constant-round quantum interactive proof systems.
\newblock {\em Theoretical Computer Science}, vol.~292, no.~3, pp. 575--588,
  2003.
\\\href{https://dx.doi.org/10.1016/S0304-3975(01)00375-9}{{\tt
  doi:10.1016/S0304-3975(01)00375-9}}

\bibitem[Wie83]{Wie83}
S.~Wiesner.
\newblock Conjugate coding.
\newblock {\em ACM SIGACT News}, vol.~15, no.~1, pp. 78--88, 1983.
\\\href{https://dx.doi.org/10.1145/1008908.1008920}{{\tt
  doi:10.1145/1008908.1008920}}

\bibitem[WZ82]{WZ82}
W.~K. Wootters and W.~H. Zurek.
\newblock A single quantum cannot be cloned.
\newblock {\em Nature}, vol.~299, pp. 802--803, 1982.
\\\href{https://dx.doi.org/10.1038/299802a0}{{\tt doi:10.1038/299802a0}}

\bibitem[Zha19]{Zha19}
M.~Zhandry.
\newblock Quantum lightning never strikes the same state twice.
\newblock In {\em Advances in Cryptology --- EUROCRYPT 2019}, vol.~3, pp.
  408--438, 2019.
\\\href{https://dx.doi.org/10.1007/978-3-030-17659-4\_14}{{\tt
  doi:10.1007/978-3-030-17659-4\_14}}

\bibitem[Zha21]{Zha21}
M.~Zhandry.
\newblock Quantum lightning never strikes the same state twice. {O}r: Quantum
  money from cryptographic assumptions.
\newblock {\em Journal of Cryptology}, vol.~34, no.~1, art.~6, 2021.
\\\href{https://dx.doi.org/10.1007/s00145-020-09372-x}{{\tt
  doi:10.1007/s00145-020-09372-x}}

\end{thebibliography}

\appendix

\section{Preliminaries}
\label{Section:prilim}

In this section, we review relevant definitions  from existing work. 

\subsection{Quantum Merlin Arthur ($\QMA$) and Related Classes}

In the field of computational complexity theory, the class $\QMA$ (Quantum Merlin Arthur) \cite{KSV02} is the quantum analog of the non-probabilistic complexity class $\NP$ (non-deterministic polynomial) or, more precisely, of the probabilistic complexity class $\MA$ (Merlin-Arthur).
  
\begin{definition}[$\QMA$]
$\QMA$ is the class of decision problems such that any language $L \subseteq \{0,1\}^*$ is in~$\QMA$ if and only if there is a polynomial-time bounded quantum verifier $V$ and a polynomial~$p(\cdot)$ such that:

\begin{itemize}
\item (Completeness) For every $x \in L$, there exists a $p(n)$-qubit quantum state $\ket{\psi}$ (called the proof or witness), where $|x|=n$, such that $\Pr[V(x,\ket{\psi}) = 1] \geq 2/3$.
\item (Soundness) For every $x \notin L$, and for any $p(n)$-qubit quantum state $\ket{\psi}$, we have that $\Pr[V(x,\ket{\psi}) = 1] \leq 1/3$.

\end{itemize}
\end{definition}

A variant of $\QMA$ is where the witness is restricted to classical strings, resulting in the class $\QCMA$. 

\begin{definition}[$\QCMA$]
$\QCMA$ is the class of decision problems such that any language $L \subseteq \{0,1\}^*$ is in~$\QCMA$ if and only if there is a polynomial-time bounded \emph{quantum} verifier $V$ and a polynomial~$p(\cdot)$ such that:

\begin{itemize}
\item (Completeness) For every $x \in L$, there exists a $p(n)$-bit classical string $y$ (called the proof or witness), where $|x|=n$, such that $\Pr[V(x,y) = 1] \geq 2/3$.
\item (Soundness) For every $x \notin L$, and for any $p(n)$-length classical string $y$, $\Pr[V(x,y) = 1] \leq 1/3$.

\end{itemize}
\end{definition}

We can also define the two-prover version of the class $\QMA$, called $\QMA(2)$ \cite{KM03}.

\begin{definition}[$\QMA(2)$]
$\QMA(2)$ is the class of decision problems such that any language $L \subseteq \{0,1\}^*$ is in~$\QMAtwo$ if and only if there is a polynomial-time bounded quantum verifier $V$ and a polynomial~$p(\cdot)$ such that:

\begin{itemize}
\item (Completeness) For every $x \in L$, there exists two $p(n)$-qubit quantum states $\ket{\psi_1}$ and $\ket{\psi_2}$, where $|x|=n$, such that $\Pr[V(\ket{x},\ket{\psi_1},\ket{\psi_2}) = 1] \geq 2/3$.
\item (Soundness) For every $x \notin L$, and for any two $p(n)$-qubit quantum states $\ket{\psi_1}$ and $\ket{\psi_2}$, where $|x|=n$,  $\Pr[V(\ket{x},\ket{\psi_1},\ket{\psi_2}) = 1] \leq 1/3$.
\end{itemize}
\end{definition}

We can define the complexity class $\QMA(k)$ in an analogous way, where the verification algorithm receives $k$ proofs $\ket{\psi_1}$, \ldots, $\ket{\psi_k}$. 
 
\end{document}